\begin{document}
\noshorttitletrue
\title[Determine the number of factors]{Determining the number of factors in a large-dimensional generalised factor model}

\author[1]{Rui Wang}
\author[2,*]{Dandan Jiang}

\affil[1]{School of Mathematics and Statistics, Xi'an Jiaotong University, Xi'an, China.}
\affil[2,*]{School of Mathematics and Statistics, Xi'an Jiaotong University, Xi'an, China.}

\corres[*]{Dandan Jiang, School of Mathematics and Statistics, Xi'an Jiaotong University, Xi'an, China. Email: jiangdd@mail.xjtu.edu.cn}

\begin{abstract}
This paper proposes new estimators of the number of factors for a generalised factor model with more relaxed assumptions than the strict factor model.
Under the framework of large cross-sections $N$ and large time dimensions $T$, we first derive the bias-corrected estimator $\hat \sigma^2_*$ of the noise variance in a generalised factor model by random matrix theory. 
Then we construct three information criteria based on $\hat \sigma^2_*$, further propose the consistent estimators of the number of factors. 
Finally, simulations and real data analysis illustrate that 
our proposed estimations  are  more accurate and avoid the  overestimation in some existing works.

\end{abstract}

\keywords{Factor model;  Noise estimation; Number of factors; Information criteria }


\subjclass{C13, C38, C43}
\maketitle

\section{Introduction}\label{isect1}

With the rapid development of information technology, the factor models with large cross-sections and time-series dimensions 
have emerged more intensively in the fields of economy and finance. For example, it can be used to determine the tick-by-tick transaction prices of a large number of assets,  and to depict the leading eigenvalues of the coincident indexes in macroeconomics, and so on. 
The factor model reveals the  intricate  relationship of the mass of variables through several common factors and simplifies the model structure. \vspace{-2mm}

A critical problem of  factor models is to determine the number  of  static factors or  dynamic factors.  ``Dynamic'' refers to whether the common factor $\mathbf F_t$ itself is modeled as a dynamic process. If it is assumed that $\mathbf F_t$ does not have an auto-correlation structure, i.e. $\mathbb E(\mathbf F_t\mathbf F_s)=0, \forall t\neq s$, the dynamic factor is transformed into a static factor.
There is a lot of literature on this issue for both
 static and dynamic factor models. On one hand, \cite{BN2002},  \cite{On2006, On2010}, \cite{Aless2010}, \cite{AH2013}, \cite{CH2014} studied on the estimation of the number of static factors.
 On the other hand, \cite{FH2000}, \cite{HL2007}, \cite{AW2007}, \cite{BN2007}, \cite{On2009}, etc. investigated on the determination of the the number of dynamic factors. 
 Moreover, the factor  models are closely related to principal component analysis (PCA)  models  and spiked models in random matrix theory.  Many related works to determine the number of factors/principal components/spikes are  developed, such as  \cite{KN2008}, \cite{US2008}, \cite{JL2009}, \cite{PY2012}, \cite{Pass2017}, etc. \vspace{-2mm}

In this paper, we focus on the static factor model and 
propose  new estimators of the number of factors by random matrix theory, as both  the cross-section units $N$ and  time series observations $T$ go to infinity. Within this context,
 a pioneering work is developed by \cite{BN2002}, which provided some information criteria for estimating the number of factors and established the consistency of the estimators of the  number of factors  as $N, T \to \infty$ simultaneously. 
 Following their work, \cite{Aless2010} improved their criteria by introducing  a tuning multiplicative constant in the penalty. More recently, \cite{Pass2017} modified the information criteria to determine the number of factors for the  strict factor model. \vspace{-2mm}
 
 However, these existing works are constrained by different reasons. 
 Practical analysis shows that the information criteria developed by \cite{BN2002} often led to non-robust estimations,  i.e. the number of factors may be overestimated  (see e.g. the application on U.S. macroeconomic data in \cite{Forni2009}). \cite{Aless2010}  considered a factor model with the idiosyncratic components only being  mildly cross-correlated.
  \cite{Pass2017} required the idiosyncratic components $\left\{\mathbf{e}_{t}\right\}_{1 \leqslant t \leqslant T}$ to be independent  and  the population covariance of the observations to be a finite-rank perturbation matrix on the identity matrix.  \vspace{-2mm}
 
The main contributions of our work are reflected in the following points.
First, we relax the  independent distributed assumptions of  $\left\{\mathbf{e}_{t}\right\}_{1 \leqslant t \leqslant T}$ in \cite{Pass2017}, and generalise the strict factor model to a more general form. Thus,  for our target model,
the population covariance matrix of the observations can be regarded as a more generalised spiked population covariance matrix as  mentioned in  \cite{JB2021}. 
Second, we establish the new information criteria by random matrix theory, and propose  more accurate estimators of the number of factors. Compared with the existing  works, our proposed estimators provide the smaller standard errors as illustrated in the simulations. Moreover, we also  prove the consistency of our estimators as $N$ and $T$ approach infinity.
Finally, although our method is constructed under the framework of large $N$ and $T$, the estimations of the number of factors are still robust  even if both $N$ and $T$ are small. As shown in simulation study, when $N=10$ and $T=50$, our estimations  are much closer to the true number of factors, while the estimations in \cite{BN2002} fail for finite samples. \vspace{-2mm}

The arrangement of this article is as follows. First, we generalise the strict  factor model and introduce the  bias-corrected estimator of the noise variance in Section~\ref{isect2}.
Then we construct three new information criteria based on the above  bias-corrected noise estimator, and give the new estimators of  the number of factors for a  generalised factor model in Section~\ref{isect3}. As a by-product, we also prove the consistency of our proposed estimators  as $N, T \to \infty$ simultaneously.
In Section~\ref{isect4}, the Monte Carlo simulations are conducted  to  evaluate the performance  of the proposed  estimators of the number of factors. In Section~\ref{isect5}, we apply our  proposed methods to some  real data sets to illustrate their feasibility  in practice. \vspace{-2mm}

\section{Estimation on the  variance of the noise in  a generalised  factor model}\label{isect2}
\subsection{A bias-corrected  estimation of  the noise variance}\label{isec2.1}
We consider the generalised factor model with the following form
\begin{equation}
	\mathbf X_{t}=\mathbf{\Lambda F}_{t}+ \mathbf{e}_{t} + \boldsymbol{\mu},
	\label{Model1}
\end{equation}
where $\mathbf X_{t}=\left(X_{1t}, X_{2t}, \cdots, X_{Nt}\right)'$ is an $N$-dimensional cross-section vector at time $t$, $\mathbf{\Lambda}$ is an $N \times M$ matrix of factor loading, $\mathbf F_{t}$ is  an $M$-dimensional  vector of common factors, $\boldsymbol{\mu}$ represents the general mean and $\mathbf {e_{t}}$ is an idiosyncratic error vector.  Compared with the strict factor  model, which assumes that the 
covariance matrix of $\mathbf {e_{t}}$ in the model (\ref{Model1}) is $\sigma^{2} \mathbf{I}$,  we extend this assumption to a general case of 
$\sigma^2 \mathbf{V}$, where the matrix $\mathbf{V}$ is a general Hermitian matrix.  The matrix $\mathbf{V}$ satisfies the following points:  First, the eigenvalues of $\mathbf{V}$ are scatted into spaces of several bulks of the general population eigenvalues. Second, the  independent  assumption of $\left\{\mathbf{e}_{t}\right\}_{1 \leqslant t \leqslant T}$ can be removed. 
Thus the model in (\ref{Model1}) is so-called  a generalised factor model. \vspace{-2mm}

To develop meaningful asymptotic theory in the large-dimensional setting,  we assume that both $N$ and $T$ go to infinity proportionally, i.e. $N /T=c_T\rightarrow c>0$, as  $T \rightarrow \infty$.
Therefore, the population covariance matrix of $\left\{\mathbf{X}_{t}\right\}_{1 \leqslant t \leqslant T}$ is \vspace{-2mm}
\begin{align}
	\boldsymbol{\Sigma}=\mathbf{\Lambda} \mathbf{\Lambda}^{\prime}+\sigma^2 \mathbf{V}.
	\label{Sigma}
\end{align}

\vspace{-2mm} To ensure the identification of the model, we impose some assumptions on the model parameters, as mentioned in  \cite{Anderson2003} :
\begin{assumption}
   \vspace{-2mm}
	 $\mathbb{E}\left(\mathbf F_{t}\right)=\mathbf{0}$ and $\mathbb{E}\left(\mathbf F_{t} \mathbf F_{t}^{\prime}\right)=\mathbf{I};$
\end{assumption}
\begin{assumption}
    \vspace{-2mm}
    $\mathbf{\Gamma = \Lambda^{\prime} \Lambda}$ is diagonal matrix of $M$ distinct diagonal eigenvalues.
\end{assumption}
Then the population covariance matrix $\boldsymbol{\Sigma}$ in (\ref{Sigma})  is exactly  the generalised spiked population covariance proposed in   \citep{JB2021}, which has the spectrum form as 
\begin{align}
	\operatorname{spec}(\boldsymbol{\Sigma})
	= \sigma^{2}(\underbrace{\alpha_{1}, \cdots, \alpha_{1}}_{n_{1}}, \cdots, \underbrace{\alpha_{K}, \cdots, \alpha_{K}}_{n_{K}}, \cdots, \underbrace{r_{1}, \cdots, r_{1}, \cdots, r_{s}, \cdots, r_{s}}_{N-M}), \label{spec}
\end{align}
where   $\alpha_{1}, \cdots, \alpha_{K} $ are spikes with multiplicity $n_{k}$, $k = 1,...,K$, respectively, satisfying $n_{1}+\cdots+n_{K}=M$ with the fixed integer $M$. The rest $ r_{1}, \cdots,  r_{s}$  are non-spiked eigenvalues, where $s$ is a fixed small number.  Moreover, we assume that the empirical spectral distribution (ESD) of $\mathbf \Sigma $ converges weakly to a nonrandom probability distribution $H$ on the real line as $N \rightarrow \infty$, which follows a probability distribution and takes the value $r_{i} \sigma^{2}$ in probability $\omega_{i}, i=1, \cdots, s,$ and $\omega_{1}+\cdots+\omega_{s}=1$ . \vspace{-2mm}

The spiked model  describes a phenomenon of a few perturbations to a positively definite matrix, see the references such as \cite{John2001},\cite{BY2008,BY2012}, \cite{JB2021}, etc.
By the close relationship between the factor model and the spiked model, to determine the number the factors is  equivalent to find the number of the spikes in the spiked covariance matrix $\boldsymbol{\Sigma}$. 
Thus we focus on the study of the spiked eigenvalues of the matrix $\boldsymbol{\Sigma}$.
Following the works in \cite{Pass2017}, it is necessary to get an accurate estimation of  $\sigma^{2}$ before estimating the number of factors (or spikes). Then, we refer the work of \cite{Jiang2021},  which  provided a bias-corrected  estimation $\hat{\sigma}_{*}^{2}$ based on random matrix theory. \vspace{-2mm}

Before introducing it, we first provide some preliminary knowledge. Decompose the population covariance matrix as $\boldsymbol{\Sigma}=\mathbf{B}_{N} \mathbf{B}_{N}^{*}$, where $^{*}$ denotes conjugate transposition. Let $\boldsymbol \xi_{t}=\mathbf{B}_{N}^{-1} \left(\mathbf X_{t}-\boldsymbol{\mu}\right)$, then $\mathbf{B}_{N}\boldsymbol{\xi}=\mathbf{B}_{N}(\boldsymbol \xi_{1}, \boldsymbol \xi_{2},\cdots , \boldsymbol \xi_{T})$ can be seen as random samples from the population  covariance matrix $\boldsymbol{\Sigma}$.  And the corresponding sample covariance matrix of  $\mathbf{B}_{N}\boldsymbol{\xi}$  is 
\begin{align}
	\mathbf{S}=\mathbf{B}_{N}\left(\frac{1}{T} \boldsymbol{\xi} \boldsymbol{\xi}^{*}\right) \mathbf{B}_{N}^{*} \label{sample},
\end{align}
which is the  generalised spiked sample covariance matrix. It is should be noted that if the mean parameter $\boldsymbol{\mu}$ is unknown, the sample covariance matrix (\ref{sample}) needs to be replaced by an unbiased form, and the corresponding ratio $c_T = N/T$  should  be replaced by $N/(T-1)$. \vspace{-2mm}

We denote  the set of ranks of  $\sigma^{2}\alpha_{k} $  with multiplicity $n_{k}$ among all the eigenvalues of $\boldsymbol{\Sigma}$ as $J_{k}=\left\{j_{k}+1, \cdots, j_{k}+n_{k}\right\}$, and represent the eigenvalues of  $\mathbf S$ sorted in descending order as $l_{1} \geq l_{2} \geq \cdots \geq l_{N}$. 
According to the classical statistical theory, the maximum likelihood estimator of $\sigma^{2}$ can be obtained as 
\begin{align}
	\begin{split}
		\hat{\sigma}^{2}_{MLE}
		=\frac{1}{(N-M)\left(\omega_{1} r_{1}+\cdots+\omega_{s} r_{s}\right)}\left(\sum_{j=1}^{N} l_{j}-\sum_{j \in \mathcal{J}_{k}, k=1}^{K} l_{j}\right), \label{MLE} \\
	\end{split}
\end{align}
which can be viewed as an appropriate estimation of the noise variance $\sigma^{2}$. However, it is well known that the sample  eigenvalues do not converge to the population ones when the cross-section dimension $N$ is large compared to the time dimension $T$. Therefore,  the estimator (\ref{MLE}) will underestimate the true noise variance $\sigma^2$. \vspace{-2mm}

To this end,  \cite{Jiang2021} established the central limit theorem of $\hat{\sigma}^{2}_{MLE}$ in the large-dimensional setting, and gave the corresponding  bias-corrected  estimation. To refer this work, we define 
$\beta=E\left|\xi_{11}\right|^{4}-q-2$ with $q=1$ for real case and 0 for complex, where $\xi_{11}$ is the first element of $\boldsymbol \xi_{1}=(\xi_{11},\xi_{12}, \cdots,\xi_{1T})$. We denote $F^{c, H}$ as  the LSD of the sample matrix $\mathbf{S}$, and further $\underline{m}(z) \equiv m_{\underline{F}^{c}, H}(z)$  as the Stieltjes Transform of $\underline{F}^{c, H} \equiv(1-c) I_{[0, \infty)}+$
$c F^{c, H} .$	
Then the  bias-corrected  estimator of  the noise variance $\sigma^{2}$ is given in the following proposition. 
\vspace{-4mm}
\begin{proposition}[]
\label{prop1}
	For the factor model (\ref{Model1}) with the spectrum in (\ref{spec}), the bias-corrected estimator $\hat{\sigma}_{*}^{2}$  is given as
	\vspace{-4mm}
	\begin{align}
		\hat{\sigma}_{*}^{2}=\hat{\sigma}^{2}_{MLE}+\frac{\left.b( \alpha_{k}, \hat{\sigma}^{2}_{MLE}\right)-\mu_{x}}{(N-M) \sum_{i=1}^{s} \omega_{i} r_{i}}
		\label{bias},
	\end{align}
	\vspace{-4mm}
	where
	\vspace{-4mm}
    \begin{align}
	b\left(\alpha_{k}, \sigma^{2}\right)=\sum_{k=1}^{K} \sum_{i=1}^{s} \frac{n_{k} c \alpha_{k} \sigma^{2} r_{i} \omega_{i}}{\alpha_{k}-r_{i}} \notag
   \end{align}
   \vspace{-4mm}
	and
	\vspace{-4mm}
	\begin{align*}
		\mu_{x}=&-\frac{q}{2 \pi i} \oint \frac{c  \underline{m}^{2}(z)\left[c \underline{m}(z) \int t\{1+t \underline{m}(z)\}^{-1} d H(t)-1\right] \int t^{2}\{1+t \underline{m}(z)\}^{-3} d H(t)}{\left[1-c \int \underline{m^{2}}(z) t^{2}\{1+t \underline{m}(z)\}^{-2} d H(t)\right]^{2}} d z \\
		&-\frac{\beta c}{2 \pi i} \oint \underline{m}^{2}(z)\left[-1+c \underline{m}(z) \int t\{1+t \underline{m}(z)\}^{-1} d H(t)\right] \\
		& \qquad\cdot \frac{\int t\{1+t \underline{m}(z)\}^{-1} d H(t) \int\{1+t \underline{m}(z)\}^{-2} d H(t)}{1-c \int \underline{m}^{2}(z) t^{2}\{1+t \underline{m}(z)\}^{-2} dH(t)} dz
	\end{align*}
	
\end{proposition}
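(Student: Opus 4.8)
The plan is to extract the deterministic bias of $\hat{\sigma}^2_{MLE}$ from the central limit theorem of \cite{Jiang2021} and then subtract it off. First I would rewrite the numerator of \eqref{MLE}: since $\sum_{j=1}^N l_j = \operatorname{tr}(\mathbf{S})$ and the subtracted terms are exactly the $M$ spiked sample eigenvalues indexed by $\bigcup_{k=1}^K \mathcal{J}_k$, the numerator equals $\operatorname{tr}(\mathbf{S}) - \sum_{k=1}^K\sum_{j\in\mathcal{J}_k} l_j$, i.e.\ the sum of the $N-M$ non-spiked sample eigenvalues. This representation exposes two separate sources of bias: the finitely many spiked eigenvalues that are removed, and the fluctuation of the remaining bulk sum about its population target $(N-M)\sigma^2\sum_{i=1}^s\omega_i r_i$.

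For the spiked contribution I would invoke the almost-sure limits of the leading eigenvalues in the generalised spiked model (\cite{BY2012}, \cite{JB2021}). Each $l_j$ with $j\in\mathcal{J}_k$ converges not to the population spike $\sigma^2\alpha_k$ but to the upward-shifted location $\sigma^2\psi(\alpha_k)$, where the spike displacement satisfies $\psi(\alpha_k)-\alpha_k = c\,\alpha_k\sum_{i=1}^s \omega_i r_i/(\alpha_k-r_i)$ under the discrete limiting spectral distribution $H$. Because the spikes are pushed above $\sigma^2\alpha_k$, the removed mass overshoots by $\sum_{k=1}^K n_k\sigma^2\{\psi(\alpha_k)-\alpha_k\}$, which equals exactly $b(\alpha_k,\sigma^2)$; this is the mechanism by which $\hat{\sigma}^2_{MLE}$ underestimates $\sigma^2$.

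For the bulk contribution, the non-spiked sum is, apart from the finitely many excluded spikes, a linear spectral statistic of $\mathbf{S}$ for $f(x)=x$. Applying the Bai--Silverstein-type CLT, adapted to the generalised spiked covariance in \cite{Jiang2021}, this statistic is asymptotically Gaussian with a deterministic mean shift equal to $\mu_x$; its two contour integrals in $\underline{m}(z)$ and $dH(t)$ are the two standard pieces of that mean, one weighted by $q$ (the real versus complex structure) and one weighted by $\beta = \mathbb{E}|\xi_{11}|^4 - q - 2$ (the fourth-moment deviation), written after the substitution $z\mapsto\underline{m}(z)$ so that the factor $f(z)=z$ is absorbed through the identity $z\underline{m}(z) = -1 + c\,\underline{m}(z)\int t\{1+t\underline{m}(z)\}^{-1}dH(t)$. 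Collecting the two pieces gives $\mathbb{E}[\hat{\sigma}^2_{MLE}] = \sigma^2 + \{\mu_x - b(\alpha_k,\sigma^2)\}/\{(N-M)\sum_{i=1}^s\omega_i r_i\}$ up to lower order, so adding back $\{b(\alpha_k,\sigma^2)-\mu_x\}/\{(N-M)\sum_{i=1}^s\omega_i r_i\}$ cancels the leading bias; the consistency of $\hat{\sigma}^2_{MLE}$ then legitimises substituting it for $\sigma^2$ inside $b$, which is precisely \eqref{bias}.

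The main obstacle is the rigorous production and evaluation of $\mu_x$. This demands the linear-spectral-statistic CLT in the non-standard generalised spiked setting, a verification that the finitely many spikes do not disturb the bulk mean at the relevant $O(1)$ order, and the explicit Stieltjes-transform contour integration for $f(x)=x$ so that the mean collapses to the displayed two-integral form. The most delicate step is controlling the interaction between the spike fluctuations and the bulk statistic so that the two bias contributions $b(\alpha_k,\sigma^2)$ and $\mu_x$ remain additive rather than coupling at first order.
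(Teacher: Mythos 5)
You should first be aware that the paper itself contains no proof of Proposition~\ref{prop1}: the result is imported wholesale from \cite{Jiang2021}, whose central limit theorem for $\hat{\sigma}^{2}_{MLE}$ is the entire justification offered. So your sketch can only be measured against that CLT-based route, and in outline it does track it. In particular, your treatment of the spiked contribution is correct: for a detectable spike, $\psi(\alpha_{k})-\alpha_{k}=c\,\alpha_{k}\sum_{i}\omega_{i}r_{i}/(\alpha_{k}-r_{i})$ under the discrete limit $H$, and summing $\sigma^{2}n_{k}\{\psi(\alpha_{k})-\alpha_{k}\}$ over $k$ reproduces $b(\alpha_{k},\sigma^{2})$ exactly, with the right sign to explain why \eqref{MLE} underestimates $\sigma^{2}$; the plug-in of $\hat{\sigma}^{2}_{MLE}$ into $b$ via consistency is also the standard move.

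The genuine gap is in your account of $\mu_{x}$, and it is not the obstacle you flag. You claim the non-spiked sum is, ``apart from the finitely many excluded spikes,'' a linear spectral statistic of $\mathbf{S}$ for $f(x)=x$, whose Bai--Silverstein mean is $\mu_{x}$. But the full-spectrum LSS for $f(x)=x$ has asymptotic mean \emph{identically zero}: $\mathbb{E}[\operatorname{tr}\mathbf{S}]=\operatorname{tr}\boldsymbol{\Sigma}$ holds exactly, the generalised Mar\v{c}enko--Pastur law preserves the first moment, $\int x\, dF^{c_{N},H_{N}}(x)=\int t\, dH_{N}(t)$, and the centred statistic has variance $O(1)$, so by uniform integrability its Gaussian limit has mean zero (in the one-bulk case $H=\delta_{1}$ one can also verify by residues that the contour integral vanishes). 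Consequently, if the removal of finitely many spiked eigenvalues were asymptotically decoupled from the bulk --- which is exactly what your final paragraph proposes to prove (``additive rather than coupling'') --- your argument would deliver $\mu_{x}=0$, collapsing the correction to the pure $b$-term; this is precisely what happens in the strict model of \cite{Pass2017}, whose estimator $\hat{\sigma}^{2}_{\mathrm{P}}$ is $\tilde{\sigma}^{2}+b(\alpha_{k},\tilde{\sigma}^{2})/(N-M)$ with no $\mu_{x}$-type term. The non-zero $\mu_{x}$ of the generalised model is generated by the feature your plan treats as negligible: with several bulks and spikes scattered in the spectral gaps, the contour in the CLT may enclose only the non-spiked components, so residues that would otherwise cancel are omitted --- $\mu_{x}$ \emph{is} the first-order spike--bulk (multi-bulk) coupling. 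The delicate step is therefore not to verify additivity but to compute the non-additive contribution, and that computation is the actual content of the CLT in \cite{Jiang2021}, which your sketch invokes rather than supplies.
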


\subsection{Monte Carlo experiments} \label{isec2.2}
Since \cite{Jiang2021} only performed the simulations for the case of equal non-spikes, we design the following simulation to verify the feasibility of Proposition \ref{prop1} in our model for more complex cases. 
We first set up the following models:
\begin{description}
	\item[Model~1.] Assuming that $\boldsymbol{\Sigma}=\sigma^{2} \mathbf{D}$, where $\sigma^{2}=4$,  $\mathbf{D}=\text{diag}(25, 16, 16, 9, 2,\cdots,2, 1,\cdots,1)$ is an $N \times N$ matrix with  spikes $(25, 16, 16, 9)$ of the multiplicity (1, 2, 1) and  non-spikes 2 of $(N-4)/2$ time, non-spikes 1 of $N/2$ time.
	\item[Model~2.] Assuming that $\boldsymbol{\Sigma}=\sigma^{2} \mathbf{U}\mathbf{D}\mathbf {U^*}$, where 
	$\mathbf{D}$ is defined in Model 1,  $\mathbf{U}$ is composed of eigenvectors of an $N \times N$ matrix $\mathbf{H}\mathbf{H}'$ with the entries of $\mathbf{H}$ being independently sampled from standard Gaussian population. 
\end{description}

 Moreover, for each model,  the Gaussian and Gamma populations are studied to show the conclusion is extensively utilisable without the limitations of population.
\begin{description}
	\item[Gaussian Assumption.]$\left\{\xi_{it}\right\}$ are $i.i.d.$ samples from standard Gaussian population;
	\item[Gamma Assumption.] $\left\{\xi_{it}\right\}$ are $i.i.d.$  samples from $ \left\{ Gamma(2,1)-2\right\} / \sqrt2$ .
\end{description}

Next we will compare $\hat{\sigma}_{*}^{2}$  with $\hat{\sigma}_{MLE}^{2}$ and several other existing noise variance estimators. The definitions of these estimators are given below.

\begin{itemize}
	\item[(a)]
	The estimator $\hat{\sigma}_{\mathrm{P}}^{2}$ in \cite{Pass2017}  is also a bias correction of  the maximum likelihood estimation of the noise,
	but for the case where the non-spikes are all 1 and $\left\{\mathbf{e}_{t}\right\}_{1 \leqslant t \leqslant T}$ are independent, which is defined as 
	$$
	\hat{\sigma}_{\mathrm{P}}^{2}=\tilde{\sigma}^{2}+\frac{c \tilde{\sigma}^{2}}{N-M}\left(M+\sum_{k=1}^{K} \frac{n_{k}}{\alpha_{k}-1}\right),
	$$
	where $\tilde{\sigma}^{2}$ is the maximum likelihood estimation of the noise given in their work.
	\vspace{-2mm}
	\item[(b)]
	The estimator  $\hat{\sigma}_{\mathrm{KN}}^{2}$ in \cite{KN2008} is described  as the solution of the following system of nonlinear equations with $m+1$ unknowns,
    $$
	\hat{\sigma}_{\mathrm{KN}}^{2}-\frac{1}{N-m}\left\{\sum_{j=m+1}^{N} l_{j}+\sum_{j=1}^{m}\left(l_{ j}-\hat{\rho}_{j}\right)\right\}=0
	$$
	and
	$$
	\hat{\rho}_{j}^{2}-\hat{\rho}_{j}\left(l_{j}+\hat{\sigma}_{\mathrm{KN}}^{2}-\hat{\sigma}_{\mathrm{KN}}^{2} \frac{N-m}{T}\right)+l_{ j} \hat{\sigma}_{\mathrm{KN}}^{2}=0, \quad j=1, \cdots, m
	$$
	\vspace{-2mm}
	\item[(c)]
	The estimator $\hat{\sigma}_{\mathrm{US}}^{2}$ in \cite{US2008} is defined as the ratio  of the median of the non-spike sample eigenvalues to the the median of the Marčenko-Pastur distribution $F_{\alpha, 1}$, 
	$$
	\hat{\sigma}_{\mathrm{US}}^{2}=\frac{\operatorname{median}\left(l_{M+1}, \cdots, l_{ N}\right)}{m_{N/T, 1}},
	$$
	where $m_{\alpha, 1}$ is the median of $F_{\alpha, 1}$. 
	\vspace{-2mm}
	\item[(d)]
	The estimator $\hat{\sigma}_{\text {median }}^{2}$ in \cite{JL2009} is defined as the median of the variances across all dimensions of the $T$ samples,
	$$
	\hat{\sigma}_{\text {median }}^{2}=\operatorname{median}\left(\frac{1}{T} \sum_{t=1}^{T} \tilde{X}_{i t}^{2}, \quad 1 \leqslant i \leqslant N\right),
	$$
	where  $\{\tilde{X}_{i t}\}$ are the  centralised data of the original samples $\{{X}_{i t}\}$.
	\vspace{-2mm}
\end{itemize}

\begin{figure}[!t]
	\centering
	\includegraphics[height = 9cm, width = 12cm]{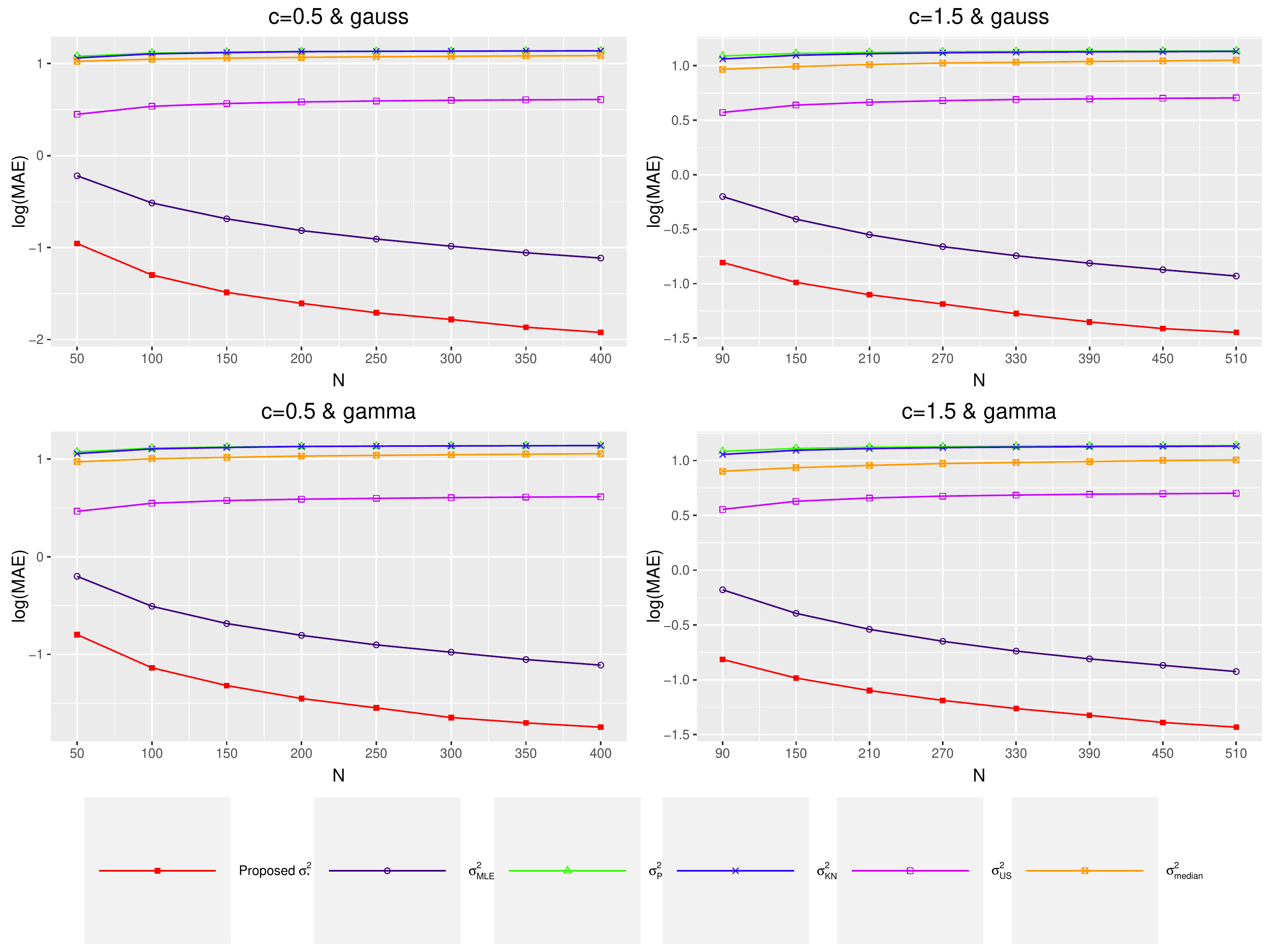}
	\caption{The logarithm transformed MAEs in Model 1  }
	\label{fig-m1}
\end{figure}
\begin{figure}[!t]
	\centering
	\includegraphics[height = 9cm, width = 12cm]{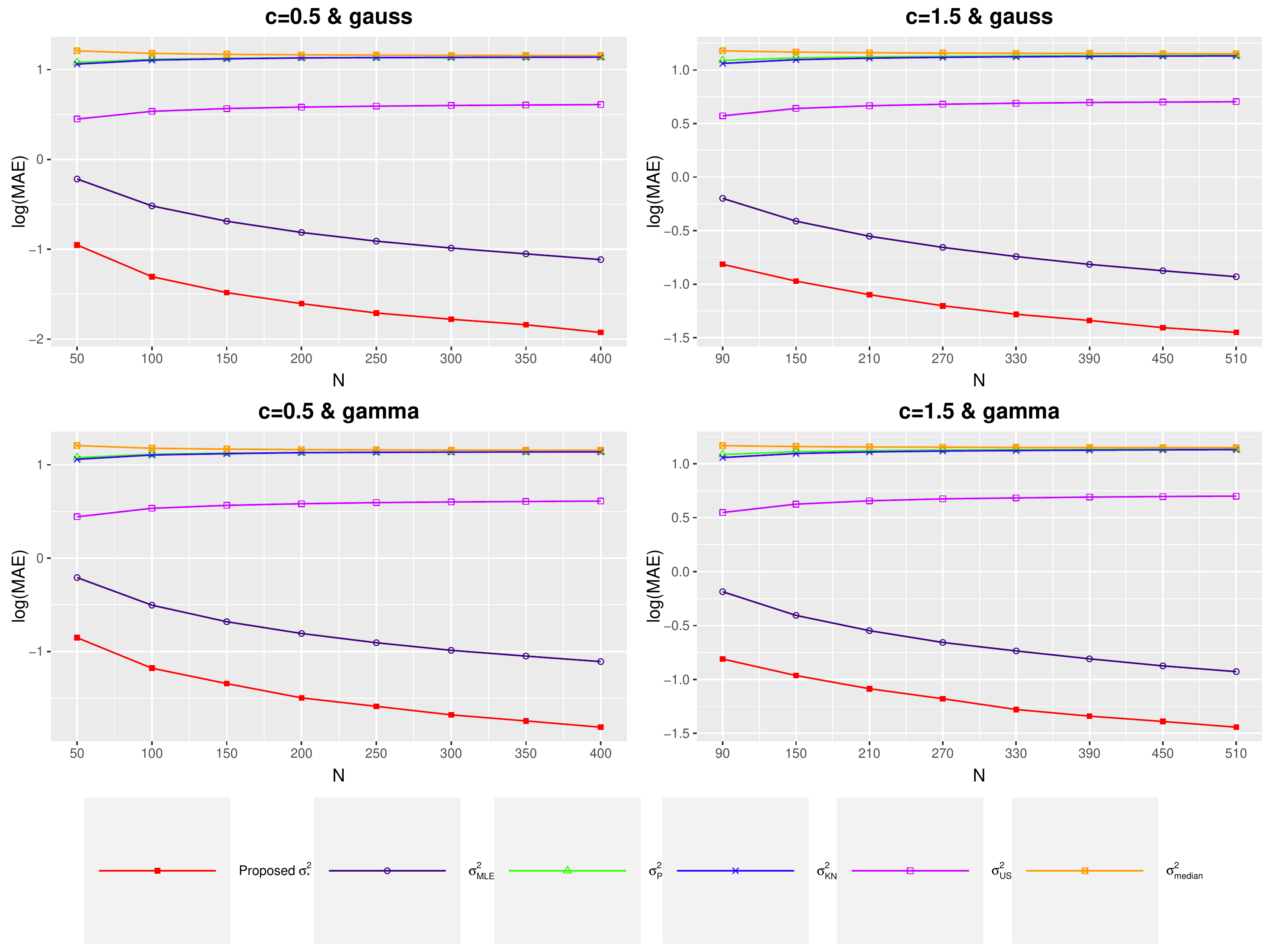}
	\caption{The logarithm transformed MAEs in Model 2  }
	\label{fig-m2}
\end{figure}
We respectively simulate  the  numerical logarithm transformed  mean absolute error (MAE) with 1,000 replications as $N$ increases  for $c=0.5$ and $c=1.5$. When the ratio $c$ is set to be $0.5$, the value of $N$ is set to $50, 100, 150, 200, 250, 300, 350, 400$, respectively. And when the ratio is set to $1.5$, the value of $N$ is set to $90, 150, 210, 270, 330, 390, 450, 510$, respectively. 
 
 The results are reported in Figure \ref{fig-m1} and \ref{fig-m2} . Under different models and distributions, the proposed bias-corrected estimator $\hat{\sigma}_{*}^{2}$ has the smallest logarithm transformed  MAE. As the $N$ increases, the logarithm transformed  MAE of  $\hat{\sigma}_{*}^{2}$ becomes smaller and smaller. However, the estimator  $\hat{\sigma}_{\mathrm{KN}}^{2}$, $\hat{\sigma}_{\mathrm{US}}^{2}$ and  $\hat{\sigma}_{\text {median }}^{2}$ obviously fail to estimate $\sigma^2$ in our model. In addition, the numerical results of $\hat{\sigma}_{\mathrm{P}}^{2}$ and $\hat{\sigma}_{\mathrm{KN}}^{2}$ are  so close that their corresponding curves are not easy to distinguish on the Figure. To show the results more accurately, we put the original simulation values in the Appendix.
 
\section{Estimation on  the number of factors in a  generalised large-dimensional factor model}\label{isect3}
In this section, we will construct the new information criteria based on $\hat{\sigma}^2_*$, and propose the  estimators of  the number of factors for our  generalised factor model.
Recall the work in \cite{BN2002}, 
the common factor $\mathbf{F}$  and the factor loading $\mathbf{\Lambda}$ can be estimated  by the  asymptotic  principal components method in a large panel. The asymptotic principal component method minimises
\begin{align}
	V(m)=\min _{\mathbf \Lambda^m, \mathbf F^{m}}(N T)^{-1} \sum_{i=1}^{N} \sum_{t=1}^{T}\left(X_{i t}-\mathbf \Lambda_{i}^{m^{\prime}} \mathbf F_{t}^{m}\right)^{2}
	\label{V}
\end{align}
subject to the normalisation  $\mathbf F^{m^{\prime}} \mathbf F^{m}/T=\mathbf I_{m}$, where $\mathbf \Lambda^{m}$ is the $N \times m$ factor loading matrix,  $\mathbf F^{m}$ is  $T \times m$  factor matrix, respectively. To be specific, under the normalisation of  $\mathbf F^{m^{\prime}} \mathbf F^{m} / T=\mathbf I_{m}$, we adopt $ \widetilde{\mathbf F}^{m}=\sqrt{T}(\mathbf U_1,\mathbf U_1, \cdots, \mathbf U_m)$ as the
 estimated factor matrix  minimising $V(m)$ , where $\mathbf U_i$ is the eigenvector corresponding to the $i$th largest  eigenvalue of $\mathbf{XX^{\prime}} $. Then, applying  the least square method, we can obtain  the corresponding factor loading matrix, $\widetilde{\mathbf \Lambda}^{m^{\prime}}=(\widetilde{\mathbf F}^{m^{\prime}} \widetilde{\mathbf F}^{m})^{-1} \widetilde{\mathbf F}^{m^{\prime}} \mathbf X=\widetilde{\mathbf F}^{m^{\prime}} \mathbf X / T$.

According to the knowledge of the linear model, the formula (\ref{V}) is a decreasing function of $m$. With the increasing integer $m$, we will get the smaller squared error loss $V(m)$.
But selecting the excessive number $m$ of factors will lose the efficiency of the model, and the simplicity of the model cannot be guaranteed.  For this reason, 
\cite{BN2002} developed  the penalty function $g(N,T)$ relied on both $N$ and $T$ , and avoided over fitting.
Let $V\left(m, \widetilde{\mathbf{F}}^{m}\right)=(NT)^{-1}\min _{\mathbf \Lambda}  \sum_{i=1}^{N} \sum_{t=1}^{T}\{X_{i t}- (\mathbf \Lambda_{i}^{m})' \widetilde{\mathbf F}_{t}^{m}\}^2$, then a loss function $V\left(m, \widetilde{\mathbf F}^m\right)+m g(N, T)$  can be used to determine the number of factors. In order to balance the goodness of fit and simplicity of the model, \cite{BN2002} generalised the $C_{p}$ criterion of Mallows (1973)  and suggested three $PC_{p}$ criteria  under the framework of large  $N$ and  $T$ as follows: 
\begin{align}
	P C_{pj}(m)&=V\left(m, \widetilde{\mathbf F}^{m}\right)+m \hat{\sigma}^{2} g_{j}(N,T),\quad j\in \{1,2,3\},\ \label{PC}
\end{align}	
where  $V\left(m, \widetilde{\mathbf F}^{m}\right)=(NT)^{-1} \sum_{i=1}^{N} \sum_{t=1}^{T}\left(X_{i t}-\widetilde{\mathbf \Lambda}_{i}^{m^{\prime}} \widetilde{\mathbf F}_{t}^{m}\right)^{2}$,  $\hat{\sigma}^{2}$ is a consistent estimator of $ (NT)^{-1} \sum_{i=1}^{N}\sum_{t=1}^{T} \mathbb E(e_{it}^{2})$, and $ g_{j}(N,T)$'s are different penalty functions 
\begin{align*}
	g_{1}(N,T)&=\left(\frac{N+T}{N T}\right) \ln \left(\frac{N+T}{N T}\right) \notag\\
	g_{2}(N,T)&=\left(\frac{N+T}{N T}\right) \ln C_{N T}^{2} \quad\quad\quad, \\
	g_{3}(N,T)&=\left(\frac{\ln C_{N T}^{2}}{C_{N T}^{2}}\right) \notag
\end{align*}
with $C_{N T}=\text{min}\{\sqrt{N},\sqrt{T}\}$.
All of these  penalty functions  satisfy two conditions: (i) $ g(N, T)\rightarrow 0$, (ii) $C^{2}_{NT}g(N, T)\rightarrow \infty$ as $N ,T \rightarrow \infty$ .\\
\indent The $\hat{\sigma}^{2}$ in (\ref{PC}) plays a role as an appropriate scaling parameter for  the penalty term. \cite{BN2002} recommended replacing it with $V\left(m_{0}, \widetilde{\mathbf F}^{m_{0}}\right)$, where $m_{0} $ is a bounded integer such that $m \leqslant m_{0}$.  The estimators of the number of factors   corresponding to the three information criteria are  $\hat{m}_{j}=\arg \min _{0 \leqslant m \leqslant m_{0} } PC_{pj}(m), j\in\{1,2,3\}$ .

As mentioned in the Introduction, the method proposed by \cite{BN2002} often overestimate the number of factors. To improve this problem, we construct the  new information criteria based on the bias-corrected noise  estimator $\hat{\sigma}_{*}^{2}$ in proposition \ref{prop1}. In fact, $V\left(m, \widetilde{\mathbf F}^{m}\right)$ and $\hat{\sigma}^{2}$ are indeed estimations of the noise variance in model (\ref{Model1}). Then we substitute $V\left(m, \widetilde{\mathbf F}^{m}\right)$ in (\ref{PC}) by $\hat{\sigma}_{*}^{2}(m)$, where $\hat{\sigma}_{*}^{2}(m)$ is the function of $m$ with definition in Proposition \ref{prop1}. Moreover, we substitute $\hat{\sigma}^{2}$ by $\hat{\sigma}_{*}^{2}(m_{0})$. Thus, our proposed new information criteria  and the estimators of the number are obtained in the following theorem.

\begin{theorem}[]
	For the determination of the number of factors in the generalised factor model (\ref{Model1}), we propose three information criteria as follows
	\begin{align}
		P C^{*}_{p 1}(m)&=\hat{\sigma}_{*}^{2}(m)+m \hat{\sigma}_{*}^{2}\left(m_{0}\right) \left(\frac{N+T}{N T}\right) \ln \left(\frac{N+T}{N T}\right)\notag\\
		P C^{*}_{p 2}(m)&=\hat{\sigma}_{*}^{2}(m)+m  \hat{\sigma}_{*}^{2}\left(m_{0}\right)\left(\frac{N+T}{N T}\right) \ln C_{N T}^{2} \\
		P C^{*}_{p 3}(m)&=\hat{\sigma}_{*}^{2}(m)+m \hat{\sigma}_{*}^{2}\left(m_{0}\right)\left(\frac{\ln C_{N T}^{2}}{C_{N T}^{2}}\right) \notag
	\end{align}	
	and the corresponding estimators of the number of factors are 
	$$
	\hat{m}_{j}^{*}=\arg \min _{0 \leqslant m \leqslant m_{0}} \mathrm{PC}_{pj}^{*}(m), \quad j \in\{1,2,3\},
	$$
\end{theorem}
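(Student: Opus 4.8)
The statement as written is essentially definitional --- it names the three criteria $PC^*_{pj}$ and their minimisers $\hat m^*_j$ --- so the content worth proving, and what the phrase ``estimators of the number of factors'' asserts, is \emph{consistency}: $P(\hat m^*_j=M)\to1$ as $N,T\to\infty$ with $N/T\to c>0$, where $M$ is the true number of factors (spikes). I would establish this for each $j\in\{1,2,3\}$ by the Bai--Ng comparison argument, adapted to the bias-corrected scale $\hat\sigma^2_*(m)$. Since $m_0\ge M$ is a fixed bound, it suffices to show $P(PC^*_{pj}(m)>PC^*_{pj}(M))\to1$ for each fixed $m\in\{0,\dots,m_0\}\setminus\{M\}$ and then union-bound over the finitely many such $m$. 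Writing $\Delta_j(m)=[\hat\sigma^2_*(m)-\hat\sigma^2_*(M)]+(m-M)\hat\sigma^2_*(m_0)g_j(N,T)$, the whole task reduces to signing $\Delta_j(m)$. Two inputs are needed: from Proposition~\ref{prop1} and the generalised spiked-model theory behind it, $\hat\sigma^2_*(m_0),\hat\sigma^2_*(M)\overset{p}{\to}\sigma^2>0$, so the plug-in scale is asymptotically a positive constant; and the $M$ sample eigenvalues carried by genuine factors separate from the non-spike bulk, while $l_{M+1},\dots,l_N$ concentrate on it with $l_{M+1}\to b$, the right bulk edge.

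For \textbf{underfitting} ($m<M$), retaining the true-factor eigenvalues $l_{m+1},\dots,l_M$ inside the noise average forces $\hat\sigma^2_*(m)-\hat\sigma^2_*(M)$ to a strictly positive limit (the misclassified-spike contribution), whereas the penalty piece equals $-(M-m)\sigma^2 g_j(N,T)(1+o_p(1))\to0$ because $g_j\to0$; hence $\Delta_j(m)>0$ with probability tending to one. For \textbf{overfitting} ($M<m\le m_0$), the extra eigenvalues $l_{M+1},\dots,l_m$ are near-edge noise eigenvalues, so the fit gain is negligible, $\hat\sigma^2_*(M)-\hat\sigma^2_*(m)=O_p(1/N)=o_p(g_j(N,T))$, while the penalty rises by the positive amount $(m-M)\sigma^2 g_j(N,T)(1+o_p(1))$; since $C^2_{NT}g_j(N,T)\to\infty$ the penalty dominates and again $\Delta_j(m)>0$ w.h.p. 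Combining the two regimes over the finitely many $m$ yields $P(\hat m^*_j=M)\to1$.

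The delicate step is the overfitting estimate $\hat\sigma^2_*(M)-\hat\sigma^2_*(m)=o_p(g_j)$. Because $\hat\sigma^2_*$ is bias-corrected rather than a bare eigenvalue average, I must track how the correction $b(\cdot,\cdot)$ and the contour integral $\mu_x$ move when a single near-edge eigenvalue is wrongly promoted to a spike, and confirm each change is $o_p(g_j)$; this rests on the edge-fluctuation and linear-spectral-statistic rates underlying Proposition~\ref{prop1}. Equivalently, the crux is verifying that all three penalties lie in the window $1/N\ll g_j\ll(\text{fit jump at a true factor})$ simultaneously --- precisely where the two Bai--Ng conditions $g_j\to0$ and $C^2_{NT}g_j\to\infty$ enter, and where the detectability (pervasiveness) of the spikes guarantees a fit jump large enough to beat the vanishing penalty in the underfitting regime.
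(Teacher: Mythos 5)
Your reading is correct: the stated theorem is definitional, and its substantive content is the consistency claim, which the paper proves separately as Theorem~\ref{Th2}. Your proposal follows essentially the same route as that proof --- the Bai--Ng comparison of $PC^*_p(m)$ with $PC^*_p(M)$, split into the overfitting case (fit gain $O_p(1/N)=o_p(g_j)$ from near-edge noise eigenvalues, dominated by the penalty via $C_{NT}^2 g_j\to\infty$) and the underfitting case (a strictly positive fit jump from the misclassified spikes under the pervasiveness/Assumption~B condition, beating the vanishing penalty $g_j\to0$), with the bias-correction terms controlled as $O_p(C_{NT}^{-2})$ --- so it matches the paper's argument in both structure and substance.
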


Furthermore, we establish the consistency of the corresponding estimators of $M$ as $N ,T\rightarrow \infty$ and give the proof as follows.
\begin{theorem}[]
	Let $0 \leqslant m \leqslant m_{0}$ and $\hat{m}_{j}^{*}=\arg \min _{0 \leqslant m \leqslant m_{0}} \mathrm{PC}_{pj}^{*}(m), j \in\{1,2,3\}$.  Then we have $\lim _{N, T \rightarrow \infty} \operatorname{Prob}[\hat{m}_{j}^{*}=M]=1$, where $M$ is the true number of factors .
	\label{Th2}
\end{theorem}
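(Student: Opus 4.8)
The plan is to show that the information criterion $\mathrm{PC}^*_{pj}(m)$ is, with probability tending to one, uniquely minimised at $m=M$ by ruling out both underfitting ($m<M$) and overfitting ($M<m\le m_0$). The guiding principle, as in \cite{BN2002}, is to rewrite the difference $\mathrm{PC}^*_{pj}(m)-\mathrm{PC}^*_{pj}(M)$ and argue that its sign is determined, in the appropriate limiting regime, by a competition between the drop (or rise) in the fitting term $\hat{\sigma}^2_*(m)-\hat{\sigma}^2_*(M)$ and the change in the penalty term $(m-M)\hat{\sigma}^2_*(m_0)g_j(N,T)$. Since all three penalties $g_j$ share the two defining properties $g_j(N,T)\to 0$ and $C^2_{NT}g_j(N,T)\to\infty$, it suffices to carry out the argument once and note that only these two properties are used.

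First I would establish the behaviour of the fitting term $\hat{\sigma}^2_*(m)$ as a function of $m$. By Proposition \ref{prop1} and the consistency of the bias-corrected estimator, $\hat{\sigma}^2_*(m_0)\to\sigma^2$ in probability, so the penalty coefficient is bounded away from zero and infinity. For the underfitting case $m<M$, I would show that $\hat{\sigma}^2_*(m)-\hat{\sigma}^2_*(M)$ is bounded below by a positive constant with probability tending to one: omitting a genuine spike $\alpha_k$ (which satisfies $\alpha_k>r_i$ for all $i$, so the spikes are separated from the noise bulk) leaves an un-modelled signal of order one in the residual variance, whereas the penalty change $(m-M)\hat{\sigma}^2_*(m_0)g_j(N,T)\to 0$ is negligible. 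Hence $\mathrm{PC}^*_{pj}(m)-\mathrm{PC}^*_{pj}(M)>0$ eventually, which eliminates all $m<M$.

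For the overfitting case $M<m\le m_0$, the fitting term can only improve by a vanishing amount, since all eigenvalues beyond the $M$th belong to the noise bulk and their removal changes $\hat{\sigma}^2_*(m)$ by a quantity that, after scaling by $C^2_{NT}$, stays bounded in probability. The decisive inequality is then
\begin{align*}
C^2_{NT}\bigl[\mathrm{PC}^*_{pj}(m)-\mathrm{PC}^*_{pj}(M)\bigr]
= C^2_{NT}\bigl[\hat{\sigma}^2_*(m)-\hat{\sigma}^2_*(M)\bigr]
+(m-M)\hat{\sigma}^2_*(m_0)\,C^2_{NT}g_j(N,T).
\end{align*}
Here the first summand is $O_P(1)$, while the second diverges to $+\infty$ because $C^2_{NT}g_j(N,T)\to\infty$ and the coefficient $(m-M)\hat{\sigma}^2_*(m_0)$ is positive and bounded away from zero; thus the whole expression is positive with probability tending to one, eliminating every $m>M$. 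Combining the two cases gives $\operatorname{Prob}[\hat{m}^*_j=M]\to 1$.

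The main obstacle I anticipate is making precise the claim that, in the overfitting regime, the improvement in the fitting term is genuinely $O_P(C^{-2}_{NT})$ rather than merely $o_P(1)$. This requires quantifying how $\hat{\sigma}^2_*(m)$ depends on the extra, purely noise-driven sample eigenvalues $l_{M+1},\ldots,l_m$, and controlling the fluctuations of those eigenvalues about the edge of the limiting spectral distribution via the central limit theorem for $\hat{\sigma}^2_{MLE}$ invoked from \cite{Jiang2021}. Equivalently, one must verify that the bias correction in \eqref{bias} does not introduce a term that overwhelms the penalty; this is where the separation between spikes and non-spikes in \eqref{spec}, together with the generalised-spike structure of $\boldsymbol{\Sigma}$, does the essential work and is the step I would treat most carefully.
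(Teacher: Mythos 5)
Your overall architecture---splitting into the underfitting and overfitting cases, comparing the change in the fitting term $\hat{\sigma}^2_*(m)-\hat{\sigma}^2_*(M)$ against the penalty, and using only the two properties $g(N,T)\to 0$ and $C^2_{NT}g(N,T)\to\infty$---is the same as the paper's, and your overfitting case $m>M$ is essentially the paper's argument in disguise: the paper bounds $(N-m)\sum_i\omega_i r_i\{\hat{\sigma}^2_{MLE}(M)-\hat{\sigma}^2_{MLE}(m)\}$ above by $(m-M)\{l_{M+1}-\hat{\sigma}^2_{MLE}(M)\sum_i\omega_i r_i\}$, a bounded quantity (see (\ref{diff1})), which is precisely the quantification you flag as the ``main obstacle'': the extra eigenvalues $l_{M+1},\dots,l_m$ are bounded, so the averaged fit gain is $O_p(N^{-1})$ and is overwhelmed by the penalty because $(N-m)g(N,T)\geq C^2_{NT}g(N,T)\to\infty$.

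The genuine gap is in your underfitting case. You claim that omitting a true spike leaves ``an un-modelled signal of order one in the residual variance'' merely because $\alpha_k>r_i$, i.e.\ because the spikes are separated from the noise bulk. Under the model's normalisation this is false: $\hat{\sigma}^2_{MLE}(m)$ is an average over $N-m$ eigenvalues, so a missed spike whose population value $\sigma^2\alpha_k$ is a fixed constant changes the fitting term only by $O_p(1/N)$, as one sees from the identity $\hat{\sigma}^2_{MLE}(m)-\hat{\sigma}^2_{MLE}(M)=\{\sum_{m<j\leq M}l_j-(M-m)\hat{\sigma}^2_{MLE}(M)\sum_i\omega_i r_i\}/\{(N-m)\sum_i\omega_i r_i\}$. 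Since the penalties $g_j(N,T)$ are of order $\ln(C^2_{NT})/C^2_{NT}$, which dominates $1/N$ whenever $N\leq T$, fixed bounded spikes would be wiped out by the penalty and the criterion would underfit rather than select $M$. What rescues the argument---and what the paper explicitly invokes right after the decomposition (\ref{split})---is Assumption B of \cite{BN2002}: the factors are pervasive, so $\mathbf{\Lambda}'\mathbf{\Lambda}$, and hence each spiked eigenvalue, grows proportionally to $N$; then the spiked sample eigenvalues $l_{m+1},\dots,l_M$ are of order $N$ and the middle term in (\ref{split}) is bounded away from zero, which is what beats the vanishing penalty. Your proposal never uses pervasiveness of the factor loadings, and without it the key lower bound in your underfitting case fails.
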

\begin{proof}
We are going to prove that $\lim _{N, T \rightarrow \infty} P\left\{PC_p^{*}(m)<PC_p^{*}(M)\right\}=0$ for all $m \neq M$ and $m \leqslant m_{0}$, where 
 $PC_p^{*}$ stands for all $PC^{*}_{pj}, j=1,2,3$, because they have the same limiting properties. 
 Since
$$
P C_p^*(m)-P C_p^*(M)=\hat{\sigma}_{*}^{2}(m)-\hat{\sigma}_{*}^{2}(M)-(M-m) \hat{\sigma}_{*}^{2}\left(m_{0}\right) g(N, T) ,
$$
it is sufficient to prove
\begin{align}
   \vspace{-4mm}
	P\left[ \hat{\sigma}_{*}^{2}(M)-\hat{\sigma}_{*}^{2}(m)>(m-M) \hat{\sigma}_{*}^{2}\left(m_{0}\right) g(N, T)\right] \rightarrow 0 \label{ineq1}
\end{align}
   \vspace{-4mm}
or 
   \vspace{-4mm}
\begin{align}
\vspace{-4mm}
	P\left[ \hat{\sigma}_{*}^{2}(m)-\hat{\sigma}_{*}^{2}(M)<(M-m) \hat{\sigma}_{*}^{2}\left(m_{0}\right) g(N, T)\right] \rightarrow 0, \label{ineq2}
\end{align}
   \vspace{-4mm}
as $N, T \rightarrow \infty .$

 Consider first $m>M$. By expression (\ref{bias}), we have
\begin{align*}
\hat{\sigma}_{*}^{2}(M)-\hat{\sigma}_{*}^{2}(m)=\left\{\hat{\sigma}_{MLE}^{2}(M)-\hat{\sigma}_{MLE}^{2}(m)\right\}\left\{1+o_{p}(1)\right\}
\end{align*}
   \vspace{-4mm}
 Moreover,
   \vspace{-4mm}
\begin{align}
	&(N-m)\sum_{i=1}^{s}w_{i}r_{i}\left\{\hat{\sigma}_{MLE}^{2}(M)-\hat{\sigma}_{MLE}^{2}(m)\right\}  \notag\\
	=& (N-M+M-m)\hat{\sigma}_{MLE}^{2}(M)\sum_{i=1}^{s}w_{i}r_{i} -(N-m)\hat{\sigma}_{MLE}^{2}(m)\sum_{i=1}^{s}w_{i}r_{i} \notag \\
	=&\sum_{j=M+1}^{N} l_{j} -\sum_{j=m+1}^{N} l_{j} -(m-M)\sum_{i=1}^{s}w_{i}r_{i} \hat{\sigma}_{MLE}^{2}(M) \notag \\
	=&\sum_{M<i \leqslant m} l_{i}-(m-M)\hat{\sigma}_{MLE}^{2}(M)\sum_{i=1}^{s}w_{i}r_{i} \notag \\
	\leqslant&(m-M)\{l_{M+1}-\hat{\sigma}_{MLE}^{2}(M)\sum_{i=1}^{s}w_{i}r_{i}\} 
	\label{diff1}
\end{align}

Since both $l_{\bar{M+1}}$ and $\hat{\sigma}_{MLE}^{2}(M)$   are bounded positive values, the right-hand side of (\ref{diff1}) is   bounded.
The inequality (\ref{ineq1}) will hold if the penalty satisfies
\begin{align}
	(N-m) g(N,T) >
	\frac{l_{M+1}-\hat{\sigma}_{MLE}^{2}(M)\sum_{i=1}^{s}w_{i}r_{i}}{\hat{\sigma}_{*}^{2}\left(m_{0}\right)\sum_{i=1}^{s}w_{i}r_{i}}
	\label{ineq3}
\end{align}
for large $N$ and $T$.  And we have $C^{2}_{NT}g(N, T)\rightarrow \infty$, the right part of inequality expression (\ref{ineq3}) is a bounded value, and the conclusion follows.\\
\indent Next, for  $m<M$, we have 
\begin{align}
	\hat{\sigma}_{*}^{2}(m)-\hat{\sigma}_{*}^{2}(M)
	=[\hat{\sigma}_{*}^{2}(m)-\hat\sigma_{MLE}^{2}(m)]+[\hat \sigma_{MLE}^{2}(m)-\hat\sigma_{MLE}^{2}(M)] +[\hat \sigma_{MLE}^{2}(M)-\hat{\sigma}_{*}^{2}(M)], \label{split}
\end{align}
where $\hat \sigma_{MLE}^{2}(k)$ represents the maximum likelihood estimation of $\sigma^2$ when the population covariance matrix has $k$ spikes. \\
\indent The first and the third terms on the right side of expression (\ref{split}) are both $O_{p}\left(C_{N T}^{-2}\right)$.   Next, consider the second term. When the  number of real population spikes is $m$ and $M$, there is only a difference of $(M-m)$ spikes between the two populations. We retain  Assumption B about the  factor loadings in \cite{BN2002}, that the factor loadings  grow to $\infty$ with the dimension $N$. 
It implies that the second term is a positive bounded value. Since $g(N, T) \rightarrow 0$ as $N, T \rightarrow \infty $, the inequality (\ref{ineq2})  holds.\\
\indent The conclusion follows.
\end{proof}

\section{Simulation study}\label{isect4}
To check the improvement performance  of our proposed information criteria, Monte Carlo simulations are conducted. we refer the data generating process in \cite{BN2002}, which is expressed as
\begin{align*}
X_{i t}=\sum_{j=1}^{M} \lambda_{i j} F_{j t}+\sqrt{\theta} e_{i t} 
\end{align*}
with the factors $F_{j t}$ being $\mathcal{N}(0,1)$  variates and $\theta=3$.
Different from simulated design in \cite{BN2002} and \cite{Pass2017}, we generalised the settings as $(\lambda_{1j},\lambda_{1j},\cdots,\lambda_{Nj}) \sim \mathcal{N}(\boldsymbol{0}_{N},\mathbf{A})$ and $\{e_{it}\} =\mathbf{V}^{\frac{1}{2}} \{\xi_{it}\}$, where $\mathbf{A}=\text{diag} \left(5,4,4,3,0,\cdots,0\right)$, $\mathbf{V}$ is diagonal or off-diagonal matrix listed in Model 3 to 5, and  $\{{\xi}_{it}\}$  are $i.i.d.$ random variables such that $E\xi_{11}=0,  E|\xi_{11}|^2=1$.


\begin{description}
	\item[Model~3.] Assuming that $\mathbf{V}=\mathbf{D} $, where $\mathbf{D}=\mathbf{I_{N}}$ is an $N \times N$ identity matrix.
	\item[Model~4.]  Assuming that $\mathbf{V}=\mathbf{D}$, and $\mathbf{D}=\text{diag} \left(2,2,\cdots,2 ,1,1,\cdots, 1 \right)$, where eigenvalues 2 and 1 are half and half.
	\item[Model~5.] Assuming that $\mathbf{V} =\mathbf{U}\mathbf{D} \mathbf {U^{*}}$, where $\mathbf{D} $ is defined in Model 4, and $\mathbf{U}$ is defined in Model 2.
\end{description}

By the generalised settings, we relax  the independent or mild cross-correlated  assumptions of
the error sequence $\{\mathbf e_{t}\}_{1\leqslant t \leqslant T }$ than previous works. Furthermore, we reuse the  two population assumptions of $\left\{{\xi}_{it}\right\}$   in subsection \ref{isec2.2}

\begin{table}[!t]
	\centering\caption{Comparison between  $PC_{pj}$ and $PC^{*}_{pj}$ for Model 3 }
	\label{tab:model3}
	\scalebox{0.9}{
	\begin{tabular}{lllllll}
		\hline
	    \multicolumn{1}{c}{($N, T$)} &  $PC_{p1}$   & $PC_{p1}$ & $PC_{p1}$   & $PC^{*}_{p1}$& $PC^{*}_{p2}$ & $PC^{*}_{p3}$\\
		\hline
		\multicolumn{7}{c}{Mode3 under Gaussian assumption } \\
		\multicolumn{1}{l}{$N=100,T=40$} & 4.00  & 4.00  & 4.32(0.48) & 4.00  & 4.00  & 4.00  \\
		\multicolumn{1}{l}{$N=100,T=60$} & 4.00  & 4.00  & 4.16(0.37) & 4.00  & 4.00  & 4.00  \\
		\multicolumn{1}{l}{$N=200,T=60$} & 4.00  & 4.00  & 4.00  & 4.00  & 4.00  & 4.00  \\
		\multicolumn{1}{l}{$N=500,T=60$} & 4.00  & 4.00  & 4.00  & 4.00  & 4.00  & 4.00  \\
		\multicolumn{1}{l}{$N=1000,T=60$} & 4.00  & 4.00  & 4.00  & 4.00  & 4.00  & 4.00  \\
		\multicolumn{1}{l}{$N=2000,T=60$} & 4.00  & 4.00  & 4.00  & 4.00  & 4.00  & 4.00  \\
		\multicolumn{1}{l}{$N=100,T=100$} & 4.00  & 4.00  & 4.72(0.57) & 4.00  & 4.00  & 4.00  \\
		\multicolumn{1}{l}{$N=40,T=100$} & 4.00  & 4.00  & 4.27(0.46) & 4.00  & 4.00  & 4.00  \\
		\multicolumn{1}{l}{$N=60,T=100$} & 4.00  & 4.00  & 4.16(0.36) & 4.00  & 4.00  & 4.00  \\
		\multicolumn{1}{l}{$N=60,T=200$} & 4.00  & 4.00  & 4.00  & 4.00  & 4.00  & 4.00  \\
		\multicolumn{1}{l}{$N=10,T=50$} & 8.00  & 8.00  & 8.00  & 4.01(0.60) & 3.98(0.61) & 4.05(0.60) \\
		\multicolumn{1}{l}{$N=10,T=100$} & 8.00  & 8.00  & 8.00  & 3.92(0.36) & 3.91(0.36) & 3.93(0.33) \\
		\multicolumn{1}{l}{$N=20,T=100$} & 5.50(0.66) & 4.92(0.64) & 6.78(0.66) & 4.00  & 4.00  & 4.00  \\
		\multicolumn{1}{l}{$N=100,T=20$} & 5.53(0.65) & 4.95(0.61) & 6.79(0.65) & 4.32(0.50) & 4.11(0.32) & 5.32(0.77) \\
		\multicolumn{7}{c}{Model3 under Gamma assumption} \\
		\multicolumn{1}{l}{$N=100,T=40$} & 4.00(0.03) & 4.00  & 4.69(0.57) & 4.00  & 4.00  & 4.01(0.11) \\
		\multicolumn{1}{l}{$N=100,T=60$} & 4.00  & 4.00  & 4.47(0.54) & 4.00  & 4.00  & 4.00  \\
		\multicolumn{1}{l}{$N=200,T=60$} & 4.00  & 4.00  & 4.00  & 4.00  & 4.00  & 4.00  \\
		\multicolumn{1}{l}{$N=500,T=60$} & 4.00  & 4.00  & 4.00  & 4.00  & 4.00  & 4.00  \\
		\multicolumn{1}{l}{$N=1000,T=60$} & 4.00  & 4.00  & 4.00  & 4.00  & 4.00  & 4.00  \\
		\multicolumn{1}{l}{$N=2000,T=60$} & 4.00  & 4.00  & 4.00  & 4.00  & 4.00  & 4.00  \\
		\multicolumn{1}{l}{$N=100,T=100$} & 4.00  & 4.00  & 5.11(0.60) & 4.00  & 4.00  & 4.00  \\
		\multicolumn{1}{l}{$N=40,T=100$} & 4.00(0.03) & 4.00  & 4.69(0.59) & 4.00  & 4.00  & 4.00  \\
		\multicolumn{1}{l}{$N=60,T=100$} & 4.00  & 4.00  & 4.49(0.54) & 4.00  & 4.00  & 4.00  \\
		\multicolumn{1}{l}{$N=60,T=200$} & 4.00  & 4.00  & 4.00  & 4.00  & 4.00  & 4.00  \\
		\multicolumn{1}{l}{$N=10,T=50$} & 8.00  & 8.00  & 8.00  & 4.04(0.47) & 3.97(0.43) & 4.11(0.53) \\
		\multicolumn{1}{l}{$N=10,T=100$} & 8.00  & 8.00  & 8.00  & 3.93(0.25) & 3.93(0.26) & 3.94(0.24) \\
		\multicolumn{1}{l}{$N=20,T=100$} & 5.93(0.67) & 5.35(0.65) & 7.13(0.65) & 4.00  & 4.00  & 4.00  \\
		\multicolumn{1}{l}{$N=100,T=20$} & 5.92(0.66) & 5.35(0.67) & 7.11(0.62) & 4.73(0.66) & 4.38(0.55) & 5.82(0.80) \\
		\hline
	\end{tabular} 
	}
\end{table}

\begin{table}[!t]
	\centering\caption{Comparison between  $PC_{pj}$ and $PC^{*}_{pj}$ for Model 4 }
	\label{tab:model4}
   \scalebox{0.9}{
	\begin{tabular}{lllllll}
		\hline
		\multicolumn{1}{c}{($N, T$)} &  $PC_{p1}$   & $PC_{p1}$ & $PC_{p1}$   & $PC^{*}_{p1}$& $PC^{*}_{p2}$ & $PC^{*}_{p3}$\\
		\hline
		\multicolumn{7}{c}{Mode4 under Gaussian assumption } \\
		\multicolumn{1}{l}{$N=100,T=40$} & 4.00(0.03) & 4.00  & 4.82(0.58) & 4.00  & 4.00  & 4.01(0.09) \\
		\multicolumn{1}{l}{$N=100,T=60$} & 4.00  & 4.00  & 4.81(0.59) & 4.00  & 4.00  & 4.00  \\
		\multicolumn{1}{l}{$N=200,T=60$} & 4.00  & 4.00  & 4.00  & 4.00  & 4.00  & 4.00  \\
		\multicolumn{1}{l}{$N=500,T=60$} & 4.00  & 4.00  & 4.00  & 4.00  & 4.00  & 4.00  \\
		\multicolumn{1}{l}{$N=1000,T=60$} & 4.00  & 4.00  & 4.00  & 4.00  & 4.00  & 4.00  \\
		\multicolumn{1}{l}{$N=2000,T=60$} & 4.00  & 4.00  & 4.00  & 4.00  & 4.00  & 4.00  \\
		\multicolumn{1}{l}{$N=100,T=100$} & 4.00  & 4.00  & 6.08(0.65) & 4.00  & 4.00  & 4.00  \\
		\multicolumn{1}{l}{$N=40,T=100$} & 4.00  & 4.00  & 5.54(0.66) & 4.00  & 4.00  & 4.00  \\
		\multicolumn{1}{l}{$N=60,T=100$} & 4.00  & 4.00  & 5.25(0.62) & 4.00  & 4.00  & 4.00  \\
		\multicolumn{1}{l}{$N=60,T=200$} & 4.00  & 4.00  & 4.00  & 4.00  & 4.00  & 4.00  \\
		\multicolumn{1}{l}{$N=10,T=50$} & 8.00  & 8.00  & 8.00  & 4.067(0.69) & 3.96(0.62) & 4.26(0.82) \\
		\multicolumn{1}{l}{$N=10,T=100$} & 8.00   & 8.00   & 8.00  & 3.90(0.42) & 3.87(0.41) & 3.94(0.44) \\
		\multicolumn{1}{l}{$N=20,T=100$} & 6.98(0.63) & 6.41(0.66) & 7.86(0.38) & 4     & 4.00(0.04) & 4 \\
		\multicolumn{1}{l}{$N=100,T=20$} & 5.85(0.67) & 5.26(0.62) & 7.08(0.64) & 4.67(0.63) & 4.31(0.49) & 5.73(0.80) \\
		\multicolumn{7}{c}{Model4 under Gamma assumption} \\
		\multicolumn{1}{l}{$N=100,T=40$} & 4.02(0.13) & 4.00(0.03) & 5.22(0.64) & 4.00  & 4.00  & 4.08(0.27) \\
		\multicolumn{1}{l}{$N=100,T=60$} & 4.00  & 4.00  & 5.19(0.61) & 4.00  & 4.00  & 4.02(0.13) \\
		\multicolumn{1}{l}{$N=200,T=60$} & 4.00  & 4.00  & 4.00  & 4.00  & 4.00  & 4.00  \\
		\multicolumn{1}{l}{$N=500,T=60$} & 4.00  & 4.00  & 4.00  & 4.00  & 4.00  & 4.00  \\
		\multicolumn{1}{l}{$N=1000,T=60$} & 4.00  & 4.00  & 4.00  & 4.00  & 4.00  & 4.00  \\
		\multicolumn{1}{l}{$N=2000,T=60$} & 4.00  & 4.00  & 4.00  & 4.00  & 4.00  & 4.00  \\
		\multicolumn{1}{l}{$N=100,T=100$} & 4.00  & 4.00  & 6.52(0.68) & 4.00  & 4.00  & 4.01(0.09) \\
		\multicolumn{1}{l}{$N=40,T=100$} & 4.12(0.32) & 4.01(0.11) & 5.91(0.67) & 4.00  & 4.00  & 4.00(0.03) \\
		\multicolumn{1}{l}{$N=60,T=100$} & 4.00(0.04) & 4.00  & 5.64(0.66) & 4.00  & 4.00  & 4.00(0.04) \\
		\multicolumn{1}{l}{$N=60,T=200$} & 4.00  & 4.00  & 4.00  & 4.00  & 4.00  & 4.00  \\
		\multicolumn{1}{l}{$N=10,T=50$} & 8.00  & 8.00  & 8.00  & 4.30(0.83) & 4.18(0.76) & 4.50(0.92) \\
		\multicolumn{1}{l}{$N=10,T=100$} & 8.00  & 8.00  & 8.00  & 3.98(0.56) & 3.94(0.53) & 4.04(0.62) \\
		\multicolumn{1}{l}{$N=20,T=100$} & 7.08(0.64) & 6.56(0.69) & 7.91(0.29) & 4.00  & 4.00(0.03) & 4.01(0.10) \\
		\multicolumn{1}{l}{$N=100,T=20$} & 6.23(0.69) & 5.68(0.66) & 7.36(0.61) & 5.12(0.75) & 4.66(0.63) & 6.19(0.80) \\
		\hline
	\end{tabular}
	}
\end{table}

\begin{table}[!t]
	\centering\caption{Comparison between  $PC_{pj}$ and $PC^{*}_{pj}$ for Model 5 }
	\label{tab:model5}
	\scalebox{0.9}{
	\begin{tabular}{lllllll}
		\hline
		\multicolumn{1}{c}{($N, T$)} &  $PC_{p1}$   & $PC_{p1}$ & $PC_{p1}$   & $PC^{*}_{p1}$& $PC^{*}_{p2}$ & $PC^{*}_{p3}$\\
		\hline
		\multicolumn{7}{c}{Model5 under Gaussian assumption} \\
		\multicolumn{1}{l}{$N=100,T=40$} & 4.00  & 4.00  & 4.85(0.63) & 4.00  & 4.00  & 4.01(0.11) \\
		\multicolumn{1}{l}{$N=100,T=60$} & 4.00  & 4.00  & 4.78(0.58) & 4.00  & 4.00  & 4.00  \\
		\multicolumn{1}{l}{$N=200,T=60$} & 4.00  & 4.00  & 4.00  & 4.00  & 4.00  & 4.00  \\
		\multicolumn{1}{l}{$N=500,T=60$} & 4.00  & 4.00  & 4.00  & 4.00  & 4.00  & 4.00  \\
		\multicolumn{1}{l}{$N=1000,T=60$} & 4.00  & 4.00  & 4.00  & 4.00  & 4.00  & 4.00  \\
		\multicolumn{1}{l}{$N=2000,T=60$} & 4.00  & 4.00  & 4.00  & 4.00  & 4.00  & 4.00  \\
		\multicolumn{1}{l}{$N=100,T=100$} & 4.00  & 4.00  & 6.09(0.65) & 4.00  & 4.00  & 4.00  \\
		\multicolumn{1}{l}{$N=40,T=100$} & 4.02(0.13) & 4.00  & 5.52(0.64) & 4.00  & 4.00  & 4.00  \\
		\multicolumn{1}{l}{$N=60,T=100$} & 4.00  & 4.00  & 5.26(0.64) & 4.00  & 4.00  & 4.00  \\
		\multicolumn{1}{l}{$N=60,T=200$} & 4.00  & 4.00  & 4.00  & 4.00  & 4.00  & 4.00  \\
		\multicolumn{1}{l}{$N=10,T=50$} & 8.00  & 8.00  & 8.00  & 4.11(0.73) & 4.00(0.65) & 4.29(0.85) \\
		\multicolumn{1}{l}{$N=10,T=100$} & 8.00  & 8.00  & 8.00  & 3.96(0.40) & 3.90(0.39) & 3.97(0.44) \\
		\multicolumn{1}{l}{$N=20,T=100$} & 6.99(0.63) & 6.39(0.63) & 7.87(0.34) & 4.00(0.03) & 4.00(0.04) & 4.00(0.03) \\
		\multicolumn{1}{l}{$N=100,T=20$} & 5.87(0.66) & 5.31(0.65) & 7.13(0.64) & 4.68(0.64) & 4.29(0.47) & 5.76(0.79) \\
		\multicolumn{7}{c}{Model5 under Gamma assumption} \\
		\multicolumn{1}{l}{$N=100,T=40$} & 4.01(0.12) & 4.00  & 5.19(0.65) & 4.00  & 4.00  & 4.06(0.24) \\
		\multicolumn{1}{l}{$N=100,T=60$} & 4.00  & 4.00  & 5.15(0.64) & 4.00  & 4.00  & 4.00(0.08) \\
		\multicolumn{1}{l}{$N=200,T=60$} & 4.00  & 4.00  & 4.00  & 4.00  & 4.00  & 4.00  \\
		\multicolumn{1}{l}{$N=500,T=60$} & 4.00  & 4.00  & 4.00  & 4.00  & 4.00  & 4.00  \\
		\multicolumn{1}{l}{$N=1000,T=60$} & 4.00  & 4.00  & 4.00  & 4.00  & 4.00  & 4.00  \\
		\multicolumn{1}{l}{$N=2000,T=60$} & 4.00  & 4.00  & 4.00  & 4.00  & 4.00  & 4.00  \\
		\multicolumn{1}{l}{$N=100,T=100$} & 4.00  & 4.00  & 6.41(0.67) & 4.00  & 4.00  & 4.00(0.05) \\
		\multicolumn{1}{l}{$N=40,T=100$} & 4.08(0.27) & 4.00(0.06) & 5.86(0.67) & 4.00  & 4.00  & 4.00  \\
		\multicolumn{1}{l}{$N=60,T=100$} & 4.00  & 4.00  & 5.58(0.64) & 4.00  & 4.00  & 4.00  \\
		\multicolumn{1}{l}{$N=60,T=200$} & 4.00  & 4.00  & 4.00  & 4.00  & 4.00  & 4.00  \\
		\multicolumn{1}{l}{$N=10,T=50$} & 8.00  & 8.00  & 8.00  & 4.27(0.79) & 4.11(0.71) & 4.50(0.90) \\
		\multicolumn{1}{l}{$N=10,T=100$} & 8.00  & 8.00  & 8.00  & 3.98(0.50) & 3.95(0.49) & 4.05(0.57) \\
		\multicolumn{1}{l}{$N=20,T=100$} & 7.15(0.62) & 6.62(0.64) & 7.92(0.28) & 4.00  & 4.00  & 4.0090.06) \\
		\multicolumn{1}{l}{$N=100,T=20$} & 6.23(0.68) & 5.67(0.65) & 7.36(0.60) & 5.07(0.71) & 4.63(0.63) & 6.20(0.79) \\
		\hline
	\end{tabular} 
	}
\end{table}

Reported in Tables \ref{tab:model3} to \ref{tab:model5} are the empirical means of the estimations of the number of factors over 1,000 replications, corresponding to Model 3 to 5 respectively. The standard errors are also given in parentheses following the estimations. If the standard error is 0, no further annotations will be made. Refer to \cite{Pass2017}, for these six scenarios, the predetermined maximum number $m_0$ of factors is set to 8. 

As shown in the simulated results, our proposed information criteria $P C^{*}_{p }$  have an overall better performance
than $PC_{p}$ in \cite{BN2002} for different models and populations. When $\text{min}(N,T) > 40$, both information criteria $PC_{P}$ and $PC^*_{P}$ can obtain satisfactory  estimation of the number of  factors. But our estimation are more accurate  with smaller standard errors.
When $\text{min}(N,T) < 40$, the original information criteria $PC_{P}$ almost   report the predetermined maximum value  8, but the estimation of our method is closer to the true value 4. In addition, when the population assumption is gamma distribution, the efficiency of detecting the true number of factors of $PC_{p}$ in \cite{BN2002} is lower than that of Gaussian distribution, but  our new information criteria $P C^{*}_{p }$ perform still well for the non-Gaussian assumptions. When the error  sequence no longer satisfy the independent assumption, our method also outperforms  $PC_{p}$ in \cite{BN2002}.


\section{Real data analysis}\label{isect5}
The proposed information criteria seem to work rather well 
in the simulated experiments. We now apply the new procedure of determining the number of the factors to two real data sets. Both data sets are downloaded from  \href{https://www.oecd.org/}{\textit{https://www.oecd.org}}. The first  is the OECD Composite Leading Indicators (CLI) data set, which is constructed by weighting indicator data in various fields of the national economy according to certain standards. It is a leading indicator reflecting a country's macroeconomic development cycle. And this data set consists of CLI for 32 OECD countries, 6 non-member economies and 8 zone aggregates ($N=46$) observed monthly from June 1998 to October 2020 ($T=269$).  The second data set contains OECD Business Confidence Indicators (BCI) data for 36 OECD countries, 6 non-member economies and 6 zone aggregates ($N=48$) observed monthly from November 2003 to May 2021 ($T=211$). 

In practice analysis, we need  to pay attention to the chosen of $m_0$, which will effect the robustness of estimations.
Since the $N  \times N$ matrix $ \mathbf \Lambda \mathbf \Lambda'$ has only $M$ non-zero spiked eigenvalues, but the rest tailed eigenvalues are all zero, so that we only need to select an arbitrary large integer $m_0$, and then we can obtain the robust estimations. But in practical applications,  except the $M$  spiked eigenvalues,  the tailed eigenvalues of $\mathbf \Lambda \mathbf \Lambda'$ are not all zeros, but most of them are relatively small values close to zero.  
Due to this reason, $m_0$ should be selected to sufficiently ensure the inclusion of the most of non-zero eigenvalues. According to practical experience, we adopt the selection of $m_0$  in the range of $0.6N$ to $0.8N$.

 
 Then, we apply the information criteria $PC_{p}$  of  \cite{BN2002}  and our proposed criteria $PC^{*}_{p}$ to estimate  the number of factors for the real data sets. The estimate results  on the two data sets are shown in Table \ref{data1} and  \ref{data2}. 
 The three rows of the tables correspond to the estimations when  $m_0$ is selected as  $0.6N$, $0.7N$ and $0.8N$, respectively.
 It suggests that the original criteria $PC_{pj}$'s seriously overestimate for the two data sets. In contrast, when $m_0=0.7N$, our proposed information criteria $PC^*_{pj}$'s estimate the number of factors for both data sets to be 10 or 11, and the corresponding cumulative contribution rate can reach 95\%.  It implies that 10 or 11 is a reasonable estimation of the number of factors for both data sets.
\begin{table}[！t]
	\centering
	\caption{Comparison between  $PC_{pj}$ and $PC^{*}_{pj}$ on the first data set}
	\begin{tabular}{ccccccc}
		\hline
		\multicolumn{1}{c}{} & $PC_{p1}$  & $PC_{p2}$ & $PC_{p3}$  & $PC^{*}_{p1}$& $PC^{*}_{p2}$ & $PC^{*}_{p3}$\\
		\hline
		\multicolumn{1}{c}{$m_{0}=28$} &24 & 23 & 27 & 8  & 8 & 9 \\
		\multicolumn{1}{c}{$m_{0}=32$} &32 & 32 & 32 & 10  &10 & 11\\
		\multicolumn{1}{c}{$m_{0}=37$} &37 & 37 & 37 & 13  & 13 & 14\\
		\hline
	\end{tabular}%
	\label{data1}%
\end{table}%

\begin{table}[！t]
	\centering
	\caption{Comparison between  $PC_{pj}$ and $PC^{*}_{pj}$ on the second data set }
	\begin{tabular}{ccccccc}
		\hline
		\multicolumn{1}{c}{} & $PC_{p1}$  & $PC_{p2}$ & $PC_{p3}$  & $PC^{*}_{p1}$& $PC^{*}_{p2}$ & $PC^{*}_{p3}$\\
		\hline
		\multicolumn{1}{c}{$m_{0}=29$} &23 & 22 & 27 & 8  & 8 & 9 \\
		\multicolumn{1}{c}{$m_{0}=34$} &34 & 34 & 34 & 10  &10 & 11\\
		\multicolumn{1}{c}{$m_{0}=38$} &38 & 38 & 38 & 12  & 12 & 13\\
		\hline
	\end{tabular}%
	\label{data2}%
\end{table}%

\section{Conclusion}
This paper aimed to determine the number of factors in a large-dimensional generalised factor model with more relaxed assumptions than that of previous works.
 For the target model, we introduced the bias-corrected noise estimator $\hat \sigma^{2}_*$ by random matrix theory,  further construct the  information criteria $PC^{*}_{p}$  based on $\hat \sigma^{2}_*$, and  estimate the number of factors consistently. The good performance of our method is demonstrated by simulations and empirical applications.
 This paper only focused on the static factor models. Further we will improve the information criteria to accommodate the dynamic factor models in the future work.

\appendix
\section{The logarithm transformed MAEs among several noise estimators} \label{app:noise1}

\begin{table}[htbp]
  \centering
  \caption{The logarithm transformed MAEs among several noise estimators in Model 1}
    \begin{tabular}{lllllll}
    \hline
	    Estimators & \multicolumn{1}{c}{$\hat{\sigma}_{*}^{2}$} & \multicolumn{1}{c}{ $\hat{\sigma}_{MLE}^{2}$} & \multicolumn{1}{c}{$\hat{\sigma}_{\mathrm{P}}^2$} & \multicolumn{1}{c}{$\hat{\sigma}_{\mathrm{KN}}^{2}$} & \multicolumn{1}{c}{$\hat{\sigma}_{\mathrm{US}}^{2}$} & \multicolumn{1}{c}{$\hat{\sigma}_{\text {median }}^{2}$} \\
	 \hline
		\multicolumn{7}{c}{ Under Gaussian assumption and $c=0.5$} \\
    $N=50,T=100$ & -0.95594 & -0.21996 & 1.079281 & 1.061465 & 0.448425 & 1.023963 \\
    $N=100,T=200$ & -1.29808 & -0.51528 & 1.113894 & 1.105469 & 0.535754 & 1.04767 \\
    $N=150,T=300$ & -1.48663 & -0.68761 & 1.124752 & 1.119236 & 0.565924 & 1.059354 \\
    $N=200,T=400$ & -1.60601 & -0.81515 & 1.130371 & 1.129887 & 0.582757 & 1.067715 \\
    $N=250,T=500$ & -1.70851 & -0.90724 & 1.133389 & 1.133012 & 0.59349 & 1.074012 \\
    $N=300,T=600$ & -1.78126 & -0.98528 & 1.13551 & 1.135202 & 0.600825 & 1.078858 \\
    $N=350,T=700$ & -1.86527 & -1.05624 & 1.137166 & 1.136905 & 0.605671 & 1.082972 \\
    $N=400,T=800$ & -1.92219 & -1.11401 & 1.138295 & 1.138069 & 0.609859 & 1.086277 \\
         \multicolumn{7}{c}{Under Gaussian assumption and $c=1.5$} \\
    $N=90,T=60$ & -0.80542 & -0.2007 & 1.089998 & 1.062031 & 0.571411 & 0.967804 \\
    $N=150,T=100$ & -0.98767 & -0.40775 & 1.112526 & 1.095778 & 0.638532 & 0.992417 \\
    $N=210,T=140$ & -1.10124 & -0.55065 & 1.122464 & 1.110508 & 0.664882 & 1.010448 \\
    $N=270,T=180$ & -1.18679 & -0.66006 & 1.128019 & 1.11872 & 0.679631 & 1.024954 \\
    $N=330,T=220$ & -1.27469 & -0.74316 & 1.131205 & 1.123598 & 0.690476 & 1.030937 \\
    $N=390,T=260$ & -1.35075 & -0.81195 & 1.13339 & 1.12695 & 0.695654 & 1.039021 \\
    $N=450,T=300$ & -1.41206 & -0.87197 & 1.135046 & 1.129465 & 0.700718 & 1.043806 \\
    $N=510,T=340$ & -1.44716 & -0.93001 & 1.136531 & 1.131607 & 0.704886 & 1.050749 \\
          \multicolumn{7}{c}{Under Gamma assumption and $c=0.5$} \\
    $N=50,T=100$ & -0.7978 & -0.20039 & 1.074523 & 1.056506 & 0.465334 & 0.972263 \\
    $N=100,T=200$ & -1.13773 & -0.50752 & 1.113048 & 1.104576 & 0.547863 & 1.003191 \\
    $N=150,T=300$ & -1.31941 & -0.68424 & 1.124514 & 1.118972 & 0.57605 & 1.017331 \\
    $N=200,T=400$ & -1.45258 & -0.80535 & 1.129859 & 1.129375 & 0.589596 & 1.031058 \\
    $N=250,T=500$ & -1.548 & -0.9029 & 1.133208 & 1.132832 & 0.597788 & 1.038072 \\
    $N=300,T=600$ & -1.64719 & -0.97789 & 1.135253 & 1.134946 & 0.605367 & 1.044539 \\
    $N=350,T=700$ & -1.70155 & -1.05285 & 1.137067 & 1.136807 & 0.61079 & 1.050343 \\
    $N=400,T=800$ & -1.74588 & -1.11015 & 1.138197 & 1.137971 & 0.614225 & 1.055111 \\
          \multicolumn{7}{c}{Under Gamma assumption and $c=1.5$} \\
    $N=50,T=100$ & -0.81382 & -0.17868 & 1.084367 & 1.056048 & 0.5545 & 0.902996 \\
    $N=100,T=200$ & -0.98342 & -0.3935 & 1.110467 & 1.093598 & 0.628851 & 0.935026 \\
    $N=150,T=300$ & -1.09768 & -0.53948 & 1.121347 & 1.109331 & 0.658228 & 0.955865 \\
    $N=200,T=400$ & -1.18749 & -0.64851 & 1.12714 & 1.117806 & 0.675449 & 0.973028 \\
    $N=250,T=500$ & -1.262 & -0.73747 & 1.130854 & 1.123223 & 0.685124 & 0.982414 \\
    $N=300,T=600$ & -1.32379 & -0.80877 & 1.133225 & 1.126769 & 0.692978 & 0.99098 \\
    $N=350,T=700$ & -1.38851 & -0.86749 & 1.134844 & 1.12925 & 0.697477 & 1.000551 \\
    $N=400,T=800$ & -1.43185 & -0.92462 & 1.136319 & 1.131384 & 0.701558 & 1.00541 \\
    \hline
    \end{tabular}%
\end{table}%

\begin{table}[htbp]
  \centering
  \caption{The logarithm transformed MAEs among several noise estimators in Model 2}
  \label{app:noise2}
    \begin{tabular}{lllllll}
    \hline
	    Estimators & \multicolumn{1}{c}{$\hat{\sigma}_{*}^{2}$} & \multicolumn{1}{c}{ $\hat{\sigma}_{MLE}^{2}$} & \multicolumn{1}{c}{$\hat{\sigma}_{\mathrm{P}}^2$} & \multicolumn{1}{c}{$\hat{\sigma}_{\mathrm{KN}}^{2}$} & \multicolumn{1}{c}{$\hat{\sigma}_{\mathrm{US}}^{2}$} & \multicolumn{1}{c}{$\hat{\sigma}_{\text {median }}^{2}$} \\
	 \hline
		\multicolumn{7}{c}{Under Gaussian assumption and $c=0.5$} \\
    $N=50,T=100$ & -0.95278 & -0.2181 & 1.07884 & 1.061015 & 0.448869 & 1.208373 \\
    $N=100,T=200$ & -1.30545 & -0.51772 & 1.114157 & 1.105739 & 0.535016 & 1.180092 \\
    $N=150,T=300$ & -1.4831 & -0.68824 & 1.124797 & 1.119282 & 0.566147 & 1.169562 \\
    $N=200,T=400$ & -1.60578 & -0.81338 & 1.130279 & 1.129795 & 0.582211 & 1.16414 \\
    $N=250,T=500$ & -1.70975 & -0.91001 & 1.133503 & 1.133127 & 0.592704 & 1.160719 \\
    $N=300,T=600$ & -1.77951 & -0.98689 & 1.135565 & 1.135257 & 0.600081 & 1.158251 \\
    $N=350,T=700$ & -1.84042 & -1.0515 & 1.137028 & 1.136767 & 0.605395 & 1.156618 \\
    $N=400,T=800$ & -1.926 & -1.11542 & 1.13833 & 1.138105 & 0.609938 & 1.155565 \\
          \multicolumn{7}{c}{Under Gaussian assumption and $c=1.5$} \\
    $N=90,T=60$ & -0.81416 & -0.19829 & 1.0894 & 1.061424 & 0.572178 & 1.178899 \\
    $N=150,T=100$ & -0.971 & -0.41135 & 1.113034 & 1.096288 & 0.640573 & 1.16716 \\
    $N=210,T=140$ & -1.09756 & -0.5519 & 1.122587 & 1.11063 & 0.666064 & 1.161364 \\
    $N=270,T=180$ & -1.20135 & -0.65616 & 1.127725 & 1.118427 & 0.680227 & 1.157876 \\
    $N=330,T=220$ & -1.2806 & -0.74153 & 1.131106 & 1.123497 & 0.689871 & 1.155746 \\
    $N=390,T=260$ & -1.33833 & -0.81492 & 1.133544 & 1.127105 & 0.696498 & 1.154434 \\
    $N=450,T=300$ & -1.40564 & -0.874 & 1.135137 & 1.129555 & 0.700147 & 1.1532 \\
    $N=510,T=340$ & -1.45006 & -0.9301 & 1.136534 & 1.13161 & 0.703742 & 1.152568 \\
          \multicolumn{7}{c}{Under Gamma assumption and $c=0.5$} \\
    $N=50,T=100$ & -0.85259 & -0.20869 & 1.076574 & 1.058641 & 0.442791 & 1.204876 \\
    $N=100,T=200$ & -1.17927 & -0.50421 & 1.112681 & 1.104226 & 0.533345 & 1.176478 \\
    $N=150,T=300$ & -1.34406 & -0.68147 & 1.124317 & 1.118785 & 0.565031 & 1.167195 \\
    $N=200,T=400$ & -1.49708 & -0.80743 & 1.129969 & 1.129485 & 0.582381 & 1.162066 \\
    $N=250,T=500$ & -1.58779 & -0.90644 & 1.133356 & 1.132979 & 0.594179 & 1.159008 \\
    $N=300,T=600$ & -1.67897 & -0.98786 & 1.135598 & 1.13529 & 0.601305 & 1.156902 \\
    $N=350,T=700$ & -1.74402 & -1.04957 & 1.136971 & 1.136711 & 0.606059 & 1.155449 \\
    $N=400,T=800$ & -1.81112 & -1.10866 & 1.138159 & 1.137933 & 0.610654 & 1.154233 \\
          \multicolumn{7}{c}{Under Gamma assumption and $c=1.5$} \\
    $N=50,T=100$ & -0.81058 & -0.18644 & 1.086393 & 1.058178 & 0.548338 & 1.168817 \\
    $N=100,T=200$ & -0.96384 & -0.40608 & 1.11229 & 1.095475 & 0.625941 & 1.159911 \\
    $N=150,T=300$ & -1.08621 & -0.54748 & 1.12215 & 1.110156 & 0.656867 & 1.156134 \\
    $N=200,T=400$ & -1.17893 & -0.65703 & 1.127791 & 1.118473 & 0.675113 & 1.154124 \\
    $N=250,T=500$ & -1.27892 & -0.73578 & 1.130749 & 1.123124 & 0.683333 & 1.15202 \\
    $N=300,T=600$ & -1.34049 & -0.80886 & 1.133229 & 1.126779 & 0.691245 & 1.151347 \\
    $N=350,T=700$ & -1.38994 & -0.87447 & 1.135158 & 1.129571 & 0.696488 & 1.150754 \\
    $N=400,T=800$ & -1.44307 & -0.9276 & 1.136436 & 1.131505 & 0.700241 & 1.150307 \\
    \hline
    \end{tabular}%
\end{table}%

\nocite{*}

\bibliographystyle{ejbib}

\bibliography{arXiv-Determine_factor_number}

\begin{thebibliography}{46}  
   \newcommand{\enquote}[1]{`#1'}
   \providecommand{\natexlab}[1]{#1}
   \expandafter\ifx\csname urlstyle\endcsname\relax
   \providecommand{\doi}[1]{doi:\discretionary{}{}{}#1}\else
   \providecommand{\doi}{doi:\discretionary{}{}{}\begingroup
   \urlstyle{rm}\Url}\fi
	\bibitem[Ahn and Horenstein (2013)]{AH2013} Ahn, S.C. and Horenstein, A.R. (2013). \enquote{Eigenvalue ratio test for the number of factors}, \emph{Econometrica}, vol. 81(3), pp. 1203--1227.
	\bibitem[Alessi \emph{et al.} (2010)]{Aless2010}  Alessi, L., Barigozzi, M. and Capasso, M. (2010). \enquote{Improved penalization for determining the number of factors in approximate factor models}, \emph{Statist. Probab. Lett}., vol. 80(23), pp. 1806--1813. 
	\bibitem[ Amengual and Watson  (2007)]{AW2007} Amengual, D. and Watson M. W. (2007). \enquote{Consistent Estimation of the Number of Dynamic Factors in a Large $N$ and $T$ Panel},\textit{ Journal of Business \& Economic Statistics}, vol. 25(1), pp. 91--96. 
	\bibitem[Anderson (2003)]{Anderson2003} Anderson, T. W. (2003).  \enquote{\emph{An Introduction to Multivariate Statistical Analysis, 3rd edition}}, Hoboken: Wiley--Interscience.
	\bibitem[Bai and Ng (2002)]{BN2002} Bai, J. and Ng, S. (2002).  \enquote{Determining the number of factors in approximate factor models}, \textit{Econometrica}, vol. 70(1), pp. 191--221.
	\bibitem[Bai and Ng (2007)]{BN2007} Bai, J. and Ng, S. (2007). \enquote{Determining the Number of Primitive Shocks in Factor Models}, \emph{Journal of Business \& Economic Statistics}, vol. 25(1), pp. 52--60. 
	\bibitem[Bai and Yao (2008)]{BY2008}  Bai, Z. D. and Yao, J. F. (2008). \enquote{Central limit theorems for eigenvalues in a spiked population model}, \textit{Annales de l’Institut Henri Poincar$\acute{e}$--Probabilit$\acute{e}$s et Statistiques}, vol. 44(3), pp. 447--474.
	\bibitem[Bai and Yao (2012)]{BY2012} Bai, Z. D. and Yao, J. F. (2012). \enquote{On sample eigenvalues in a generalized spiked population model}, \emph{Journal of Multivariate Analysis}, vol. 106, pp. 167--177.
	\bibitem[Caner and Han (2014)]{CH2014}Caner, M. and Han, X. (2014). \enquote{Selecting the Correct Number of Factors in Approximate Factor Models: The Large Panel Case With Group Bridge Estimators}, \emph{Journal of Business \& Economic Statistics}, vol. 32(3), pp. 359--374.
	\bibitem[Forni \emph{et al.} (2000)]{FH2000} Forni, M., Hallin, M., Lippi, M. and Reichlin, L. (2000). \enquote{The generalized Dynamic Factor Model: Identification and Estimation}, \emph{The Review of Economics and Statistics}, vol. 82(4), pp. 540--554.
	\bibitem[Forni \emph{et al.} (2009)]{Forni2009} Forni, M., Gannone, D., Lippi, M., and Reichlin,L. (2009). \enquote{Opening the black box: Structuralfactor models versus structural VARs}, \emph{Econometric Theory}, vol. 25(5), pp. 1319--1347.
	\bibitem[Hallin and Liska (2007)]{HL2007} Hallin, M. and Liska, R. (2007). \enquote{Determining the Number of Factors in the generalized Dynamic Factor Model}, \emph{Journal of the American Statistical Association}, vol. 102, pp. 603--617. 
	\bibitem[Jiang and Bai (2021a)]{JB2021}Jiang, D. D. and Bai, Z. D. (2021a). \enquote{Generalized four moment theorem and an application to CLT for spiked eigenvalues of large-dimensional covariance matrices}, \emph{Bernoulli}, vol. 27(1), pp. 274--294.
	\bibitem[Jiang (2022)]{Jiang2021} Jiang, D. D. (2022). \enquote{A universal test on the number of spikes in a high-dimensional generalized spiked model and its applications}, \emph{https://arxiv.org/abs/2203.06924}.
	
	\bibitem[Johnstone (2001)]{John2001}Johnstone, I. M. (2001). \enquote{On the distribution of the largest eigenvalue in principal components analysis}, \emph{ Ann. Statist.}, vol. 29(2), pp. 295--327.
	\bibitem[Johnstone and Lu (2009)]{JL2009} Johnstone, I. M. and Lu, A. Y. (2009). \enquote{On consistency and sparsity for principal components analysis in high dimensions}, \emph{J. Am. Statist. Ass.}, vol. 104(486), pp. 682--693.
	\bibitem[Kapetanios (2004)]{kap2004}Kapetanios, G. (2004). \enquote{ A New Method for Determining the Number of Factors in Factor Models With Large Datasets}, Working Paper no. 525, Queen Mary, University of London. 
	\bibitem[Kritchman and Nadler (2008)]{KN2008} Kritchman, S. and Nadler, B. (2008). \enquote{Determining the number of components in a factor model from limited noisy data}, \emph{Chem. Int. Lab. Syst.}, vol. 94(1), pp. 19--32.
	\bibitem[Onatski (2006)]{On2006}Onatski, A. (2006).  \enquote{A Formal Statistical Test for the Number of Factors in Approximate Factor Models}, Preprint, Columbia University, Economics Dept.
	\bibitem[Onatski (2009)]{On2009} Onatski, A. (2009). \enquote{Testing hypotheses about the number of factors in large factor model}, \emph{Econometrica}, vol. 77(5), pp. 1447--1479.
	\bibitem[Onatski (2010)]{On2010} Onatski, A. (2010).  \enquote{Determining the number of factors from empirical distribution of eigenvalues}, \emph{Rev. Econ. Stat}, vol. 92(4), pp. 1004--1016. 
	\bibitem[Passemier and Yao (2012)]{PY2012}  Passemier, D. and Yao, J. F. (2012). \enquote{On determining the number of spikes in a high-dimensional spiked population model}, \emph{Rand. Matr. Theor. Appl.}, vol. 1(1), article 1150002.
	\bibitem[Passemier \emph{et al.} (2017)]{Pass2017} Passemier, D., Li, Z. Y. and Yao, J. F. (2017). \enquote{On estimation of the noise variance in high-dimensional probabilistic principal component analysis}, \emph{Journal of the Royal Statistical Society Series B (Statistical Methodology)}, vol. 79(1), pp. 51--67.
	\bibitem[Paul (2007)]{Paul2007} Paul, D. (2007). \enquote{Asymptotics of sample eigenstructure for a large dimensional spiked covariance model}, \emph{Statistica Sinica}, vol. 17(4), pp. 1617--1642.
	\bibitem[Ulfarsson and Solo (2008)]{US2008} Ulfarsson, M. O. and Solo, V. (2008). \enquote{Dimension estimation in noisy PCA with SURE and random matrix theory}, \emph{
   IEEE Trans. Signal Process.}, vol. 56(12), pp. 5804--5816.
	
\end{thebibliography}

\end{document}